\theoremstyle{definition}
\theoremstyle{plain}
\newtheorem{thm}{Theorem}[section]
\newtheorem{claim}{Claim}
\newtheorem{obs}{Observation}
\newtheorem{cor}{Corollary}
\newtheorem{lemma}{Lemma}
\theoremstyle{remark}
\newtheorem{remark}{Remark}
\newcommand{\comment}[1]{}
\newcommand*\samethanks[1][\value{footnote}]{\footnotemark[#1]}
\begin{document}

\title{Efficient Sum-Based Hierarchical Smoothing Under $\ell_1$-Norm}
%
%
%
%
%
\author{
Siavosh Benabbas\thanks{
Department of Computer Science, University of Toronto}
\\{\tt siavosh@cs.toronto.edu}
\and 
Hyun Chul Lee\thanks{
Thoora Inc., Toronto, ON, Canada}
\\{\tt chul.lee@thoora.com}
\and 
Joel Oren\samethanks[1] \thanks{This research was supported by the
  MITACS Accelerate program, Thoora Inc., and The University of Toronto, Department of Computer Science.}
\\{\tt oren@cs.toronto.edu}
\and 
Yuli Ye\samethanks[1] \samethanks[3]
\\{\tt y3ye@cs.toronto.edu}
}

\maketitle
\begin{abstract}
We introduce a new regression problem which we call the
 \textit{Sum-Based Hierarchical Smoothing} problem. Given a directed
 acyclic graph and a non-negative value, called \emph{target value}, for each vertex in the graph, 
 we wish to find non-negative values for the vertices satisfying a certain constraint while minimizing
 the distance of these assigned values and the target values in the $\ell_p$-norm.
 The constraint is that the value assigned to each vertex should be no less than the sum of the values assigned to its children.
 We motivate this problem with applications in information retrieval and web mining.
 While our problem can be solved in polynomial time using linear programming,
 given the input size in these applications such a solution is too slow.


 We mainly study the $\ell_1$-norm case restricting the underlying graphs to rooted trees.
 For this case we provide an efficient algorithm, running in $O(n^2)$ time. 
 While the algorithm is purely combinatorial, its proof of correctness is an elegant
 use of linear programming duality. We also present a number of other positive and negatives
 results for different norms and certain other special cases.

 We believe that our approach may be applicable to similar problems, where
 comparable hierarchical constraints are involved, e.g. considering the
 average of the values assigned to the children of each vertex. While similar in flavour to other smoothing 
 problems like Isotonic Regression (see for example [Angelov et al. SODA'06]), our problem is arguably
 richer and theoretically more challenging.
\end{abstract}
\setcounter{page}{0}
\thispagestyle{empty}
\newpage
%

\section{Introduction}
The prevalence of popular web services like Amazon, Google, Netflix, and StumbleUpon has given rise to 
many interesting large-scale problems related to classification, recommendation, ranking, and collaborative filtering. In several recent studies (e.g. \cite{kfb, pg, ckp}), researchers have incorporated the underlying class hierarchies of the data-sets into the setting of recommendation systems.
Moreover, Koren et al. \cite{dkk11} recently demonstrated an application of hierarchical classifications of topics,
i.e. \emph{taxonomies}, in Collaborative Filtering settings, in particular, music recommendation. \comment{In this setting, linking the items to a taxonomy alleviates some of the difficulties that arise from the high sparsity of the data.}
In these application scenarios, the taxonomies\comment{objects of the given dataset} are abstracted as trees.
Associated with the vertices are scalar target values, 
typically inferred through the use of various machine learning or information retrieval methods.
For instance, given a hierarchy of topics and a search query, the target values
could be the relevance measures of the topics to the search query.

When a taxonomy is used, one would usually like to enforce particular constraints on the value
assigned to the vertices to properly represent the hierarchical relationship among them. 
Typically, the relevant machine learning approaches are ill-equipped to handle these 
requirements. Often, these constraints state that the value of each vertex should be at least some function
of the value of its direct children in the taxonomy (e.g. \cite{pg, ckp}).


Going back to the previous example of topics and search query, imagine that the taxonomy contains the topics
\emph{sports}, \emph{baseball}, \emph{football}, and \emph{basketball} with the first topic being the parent of
the other three and that the search query is ``ESPN''. One would like to find the relevance of this query to every
topic in the taxonomy. A reasonable requirement of these relevance values would be that 
%
 the relevance of ``ESPN'' to \emph{sports} would be no less than the sum of its relevance to
\emph{baseball}, \emph{football}, and \emph{basketball}. One way to solve this problem would be to directly
impose such a constraint on the learning algorithm that infers the relevance values using regularization; i.e. adding
an additional term in the objective function of that algorithm penalizing any violation of the constraint.
However, this approach has two problems. First, it ``softens'' our requirements; i.e. it allows for possible
violations, to some limited extend. Moreover, it can dramatically deteriorate the running time of the process
of learning or restrict our choice of the learning algorithm.

Instead, we take the following, widely used, two-step approach. 
Given a search query $s$, we first infer each of the relevance scores of each of the 
topics, disregarding the hierarchy constraints. Then, we \emph{smoothen}
the inferred relevance scores by 
modifying them so as to uphold the above sum constraint. 
We would want the change of the relevance scores in the second step to be as small as possible.
 As the relevance scores are scalar values, we can represent both the original and 
final relevance scores as two vectors with non-negative values, and 
measure their difference in a suitable norm (e.g. the $\ell_1, \ell_2$ or the $\ell_\infty$ norms). 
The subject of this paper is how to perform the second step.


We formulate this problem which we call the \textit{Sum-Based Hierarchical Smoothing} problem (SBHSP) as follows.
Given a rooted tree (or in general a directed acyclic graph) $G = (V, E)$ and a vector of original vertex values
(called \emph{target values}) ${\bf a} = (a_{v_1}, a_{v_2}, \ldots, a_{v_n})$ the objective is to find a vector of new
vertex values (called \emph{assigned values}) ${\bf x} = (x_{v_1}, x_{v_2}, \ldots, x_{v_n})$ with the following properties.
(i) for any node $w$ with incoming edges $(u_1, w), \ldots, (u_k, w)$ we have $x_{u_1} + \cdots + x_{u_k} \leq x_{w}$. (ii)
$||{\bf a} - {\bf x}||_p$ is minimized. Different values of $p$ result in different variants of the problem. We mainly study the problem
for $p = 1$ and $p = \infty$ but the case of $p = 2$ is also interesting. It is not hard to see that for $p=1$ the problem can be
solved in polynomial time using linear programming (see inequalties \eqref{bound1}-\eqref{bound3p}) and for $p>1$ it can be solved by using a
suitable separation oracle and the Ellipsoid method. However given the typical size of taxonomies these solutions are too slow.
We note that this problem seems to be more complex than other previously considered similar problems as the assigned value of each vertex
affects the possible values for any vertex it shares a parent with. In particular, to the best of our knowledge techniques used for
similar problems are ineffective for it.

\paragraph{Contributions:} Our main contribution is a purely combinatorial algorithm when the input is a rooted tree and $p=1$
(i.e.~the $\ell_1$ norm) that runs in time $O(n^2)$. We note that the
$\ell_1$ norm was previously used as a good measure of difference in similar regression problems (e.g. see \cite{ahkw}). As many
hierarchical structures in practice are trees, our algorithm can be used in many practical applications.
Our second contribution is a linear time algorithm for the case $p=\infty$ which works for any directed acyclic graph. 
We also show an efficient FPTAS for optimizing the $\ell_1$ norm for another class of DAGs (directed bilayer graphs.)
Finally, we show that if one adds the extra condition that the assigned values should be \emph{integral} the problem is hard to approximate
(to within a polylogarithmic factor) for any $\ell_p$ norm for $1 \leq p < \infty$. Interestingly, given that our algorithm for the $\ell_1$ norm
on trees always outputs an integral solution this last result suggests that new ideas are needed to extend it to general DAGs.


Our algorithm for the $\ell_1$ case has a rather simple structure. We assign values to the vertices of the tree in a bottom-up manner. For each
vertex we first assign a valid (but possibly suboptimal) value and then use paths going down from that vertex to ``push the excess'' down the tree
and improve the objective value.
While the algorithm is purely combinatorial, its proof of correctness is an elegent use of linear programming duality. In particular, we use the
complementary slackness condition to show that if the algorithm can no longer push the excess of a node down the tree the values assigned to its
subtree most be optimal.

\paragraph{Organization:} We present the relevant previous work in Section~\ref{sec:previous}. In Section~\ref{sec:prelim}, we present
a precise definition of the problem and some preliminaries. We present our first algorithm which is for the case of trees and $\ell_1$ norm in
Section~\ref{sec:algabs} and prove its correctness. In Section~\ref{sec:algfinal} we show how this algorithm can be optimized to run in the
promissed $O(n^2)$ time. We conclude and propose several open problems in Section~\ref{sec:conclusion}. 
We extend the algorithm to the case of \emph{weighted} $\ell_1$ norm in Appendix~\ref{sec:weighted}. We present our algorithm
for the case of $\ell_\infty$ in Appendix~\ref{sec:ellinfty}. We leave our hardness of approximation result to Appendix~\ref{sec:hardness}
and our results for the case of bilayer graphs to Appendix~\ref{sec:bilayer}.


\section{Previous Work}\label{sec:previous}
The main motivation of the current paper is the application of taxonomies in regression. A recent example, studied by 
Koren et al.~\cite{dkk11}, is the application of topic hierarchies in the context of collaborative filtering. 
They provide a method of linking the data-set to a four level taxonomy, which helps them circumvent difficulties related to the
 size of the data-set. 

Regression and smoothing problems have been studied extensively in recent years. Perhaps the most relevant problem to our setting is the {\em Isotonic regression} problem and its variants.
There one wishes to find a  closest fit to a given vector subject to a set 
of monotonicity constraints. More precisely, let $\mathbf{a}=\langle a_1, \dots, a_n \rangle$ be $n$ target values
, and let $E$ be a set of $m$ pairwise order constraints on these variables. The
{\em Isotonic regression} problem is to find values $\mathbf{x}=\langle x_1, \dots, x_n \rangle$ such that $x_i\ge x_j$ whenever
$(i,j)\in E$ for which the distance between $\mathbf{x}$ and $\mathbf{a}$ is minimized. To put things
in a language similar to ours, in isotonic regression the assigned value of each vertex should be bigger than the \emph{maximum} of the
assigned value of its children as opposed to the \emph{sum} of those values in our problem. 

Common choices of distance functions include the weighted $\ell_1$, $\ell_2$ and $\ell_\infty$ norms. The Isotonic regression problem for such 
weighted norms have been studied extensively. 
For some of the results for the $\ell_1$ and $\ell_2$ norms see \cite{s1,ahkw,best}. Stout also maintains a web site containing
some of the fastest known Isotonic regression algorithms for different settings at \cite{stoutxxisoregalg}.

The Isotonic regression problem belongs to a more general class of problems known as {\em order restricted statistical inference}. 
Order restricted statistical inference was first studied by Barlow et al~\cite{bbbb}. The Isotonic regression problem 
became popular since it has many applications in testing~\cite{leuraud, mancuso}, modelling~\cite{morton, ulm}, data smoothing~\cite{friedman, pg} and other areas~\cite{rwd} related to statistical and computational data analysis. It has been shown to be an important 
post-processing smoothing tool to impose desired hard constraints on the values that a learning algorithm has produced. 
Variations of Isotonic regression have been used for other applications like template learning~\cite{ckp}, ranking~\cite{dczmbzbld, mscz}, and classification~\cite{kfb}.

\section{Preliminaries}\label{sec:prelim}
We now formally define the problem as follows. Given a tree (or DAG)
$T=(V,E)$ rooted at node $r \in V$,
and a vector $\mathbf{a} \in \mathbb{R}^{n}_{\geq 0}$ of the target values of the vertices. We wish
to find the \emph{closest} vector $\mathbf{x} \in \mathbb{R}^{n}_{\geq 0}$, in
the $\ell_p$-norm,
 so that for each node $v$, with
children $u_1,\ldots,u_k$, $x_v \geq x_{u_1} + x_{u_2} + \cdots + x_{u_k}$.

While most of the paper addresses the case of $p=1$, we also
discuss the case of 
 $p=\infty$ in Appendix~\ref{sec:ellinfty}. Note that our hardness results 
apply to \emph{all} $1 \leq p < \infty$.

For a vertex $u \in T$, we denote the set of nodes with edges to $u$ the \emph{children} of
$u$ or $C(u)$, similarly the parent of $u$ is $A(u)$ (in the case where the
underlying graph is a general DAG, $A(u)$ will be a set of nodes). Throughout the paper, we will make extensive use of various paths
in the given tree. For this purpose, we let $P_{u \rightarrow v}$ denote the (unique) path from vertex $u$ to vertex
$v$ in $T$. We denote the sub-tree rooted in vertex $v$ by $T_v$. For a given sub-tree $T_v$, we define
$\mathbf{a}|_{T_v}$ as the vector of target values corresponding to the nodes in $T_v$; we similarly define $\mathbf{x}|_{T_v}$. 


\section{The Algorithmic Approach for $\ell_1$}
\label{sec:algabs}
As an initial attempt, consider the following 
trivial feasible solution. 
For each leaf $\ell \in T$, set $x_\ell = a_\ell$. 
Then, for each internal node $v$ set $x_v = \max(a_v,\sum_{u \in C(v)}x_u)$, 
by traversing the tree in post-order. 
However, it is not hard to see that this approach would be arbitrarily 
sub-optimal (see Figure~\ref{fig1}.) Indeed, in some cases it is preferable 
to lower the existing $x$ values of a 
given node's children, instead of raising the node's 
$x$ value, as this might help the objective value on the nodes ancestors as well.

In order to optimize the objective function, our algorithm will proceed as follows. 
By traversing the tree $T$ in post-order, it
performs the following sequence of steps for every vertex $v$. $x_v$ is
initially set to the maximum of $a_v$ and the sum of the
$x$ values of its children, which is clearly a feasible assignment for $T_v$. 
It then improves the assignments for $T_v$ 
by sequentially decreasing the values
of some vertices that are located on some path $P$ from $v$ to some other node in
$T_v$. The adjustments are made so that the overall improvement in the objective function
equals the improvement in $|a_v - x_v|$. 
We will refer to such paths as \emph{push-paths}, and the improvements made on them as \emph{push operations}. 
The algorithm is presented below as Algorithm~\ref{alg1}. 
The procedure $Push-Path(\mathbf{x}, P, \epsilon)$ checks what is the improvement on the objective function value
if we reduce the $x$ value of all vertices in the path $P$ by $\epsilon$. 
This path will always start at the current vertex $v$.

For now we do not discuss how to find the push path or the exact value that we push down that path.
This abstraction was made deliberately, so as to to separate the correctness of the algorithm from its performance. 
In fact, we later show that the individual paths need not be enumerated separately.

\begin{algorithm}[t]
\KwIn{Undirected tree $T=(V,E)$, with a vector of vertex weights
  $\mathbf{a} \in \mathbb{R}_+^n$}
\KwOut{A feasible vector of weights $\mathbf{x} \in \mathbb{R}_+^n$
  for $V$}
\BlankLine

\SetKwBlock{Begin}{begin}{end}

\SetKwFunction{KwPush}{Push-Path}
\SetKwFunction{KwImproveNode}{ImproveSubtree}
Let $v_1,v_2, \dots, v_{n-1}, v_n$ be the vertices in $T$ sorted in
post-order. \\

\For{$v \leftarrow 1$ to $n$}{
$x_v=max\{\sum_{u\in C(v_i)}x_u, a_v\}$\\
\KwImproveNode{$v$}
}
\BlankLine
\KwImproveNode{Vertex $u$} \\
\While{$\exists$ path $P$ from $u$ down to a vertex $v$, and $\epsilon>0$ such that $v$ is either a leaf or $x_v > \sum_{w \in C(v)} x_w$ and
  \KwPush($\mathbf{x}, P$,$\epsilon$)=$\epsilon$}{

\KwPush{$\mathbf{x}, p, \epsilon$} \\
}

\BlankLine
\KwPush{Assignment $\mathbf{x}$, Path $P$, Non-negative real-value $\epsilon$} \\
\Begin{
Let $v_1,\ldots,v_k$ be the sequence of nodes on the $P$ from top to
bottom. \\

$old= \sum_{1 \leq i \leq k}|x_{v_i}-a_{v_i}|$ \\
$x_{v_1}=x_{v_1}-\epsilon$

\For{$i=2$ to $k$}{
$t=\sum_{u\in C(v_{i-1})}x_u-x_{v_{i-1}}$ \\
\If{$t > 0$}{
$x_{v_i}=x_{v_i}-t$
}
}
$new= \sum_{1 \leq i \leq k}|x_{v_i}-a_{v_i}|$ \\
\Return $old-new$
}
\caption{Push-Improve}\label{alg1}
\end{algorithm} 

The following theorem states that the output of Algorithm~\ref{alg1} is optimal.

\begin{thm}
\label{thm:main}
When Algorithm~\ref{alg1} terminates, the obtained vector $\mathbf{x}$
is a feasible and optimal assignment for the given tree $T$.
\end{thm}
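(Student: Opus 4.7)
I would prove Theorem~\ref{thm:main} via LP duality, exhibiting a dual feasible solution that, together with the algorithm's output $\mathbf{x}$, satisfies complementary slackness. Let $r$ denote the root of $T$. Feasibility of $\mathbf{x}$ is immediate: the initialization $x_v \leftarrow \max\{a_v, \sum_{u \in C(v)} x_u\}$ already satisfies $v$'s sum constraint, and \textsc{Push-Path} preserves feasibility by construction, propagating a decrement from $v_{i-1}$ to $v_i$ precisely when the constraint at $v_{i-1}$ would otherwise be violated.

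Write the primal with auxiliary $z_v \geq |x_v - a_v|$, minimizing $\sum_v z_v$ subject to $x_v - \sum_{u \in C(v)} x_u \geq 0$ and $x_v \geq 0$. Its dual has variables $\alpha_v, \beta_v \geq 0$ (with $\alpha_v + \beta_v = 1$), $\lambda_v \geq 0$, and $\mu_v \geq 0$. Setting $\sigma_v := \beta_v - \alpha_v \in [-1,1]$, the dual constraint from $x_v$ reduces to
\[
\lambda_{A(v)} - \lambda_v \;=\; \sigma_v + \mu_v,
\]
with the convention $\lambda_{A(r)} := 0$. Complementary slackness demands $\sigma_v = -1$ whenever $x_v > a_v$, $\sigma_v = +1$ whenever $x_v < a_v$, $\lambda_v > 0 \Rightarrow x_v = \sum_{u \in C(v)} x_u$, and $\mu_v > 0 \Rightarrow x_v = 0$. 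Given $\mathbf{x}$, I would fix the forced values of $\sigma_v$, leave $\sigma_v \in [-1,1]$ free at the equality vertices $\{v : x_v = a_v\}$, set $\mu_v = 0$ wherever $x_v > 0$, and define $\lambda_v$ top-down via the recursion. Telescoping then yields $\lambda_v = -\sum_{u \in P_{r \to v}} \sigma_u$ (up to non-negative $\mu$-corrections at $x_u = 0$ vertices). The remaining task is to pick the free $\sigma_v$'s so that $\lambda_v \geq 0$ everywhere and $\lambda_v = 0$ at every slack vertex and every leaf with $x_v > 0$; equivalently, the partial sums of $\sigma$ along root-initial paths must be non-positive, with equality at each such endpoint.

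The bridge to the algorithm is the observation that decrementing every vertex on a tight push-path $P$ by a small $\epsilon$ changes the $\ell_1$ objective by $-\epsilon \sum_{y \in P} \chi_y$, where $\chi_y := +1$ if $x_y > a_y$ and $\chi_y := -1$ otherwise. The algorithm's termination condition therefore forces $\sum_{y \in P} \chi_y \leq 0$ along every valid push-path from every starting vertex. These inequalities, together with the $[-1,1]$ freedom at equality vertices, admit a consistent top-down assignment of the $\sigma_v$'s yielding the desired partial-sum bounds; the idea is to fix $\sigma_v = -\chi_v$ at non-equality vertices and to choose $\sigma_v$ at equality vertices greedily so as to keep each running partial sum as large as possible subject to staying $\leq 0$, using the no-improving-push inequalities to verify that no slack or leaf endpoint becomes negative.

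The main obstacle is that the algorithm only directly enforces ``no improving push from $v$'' immediately after \textsc{ImproveSubtree}$(v)$; subsequent processing of ancestors could in principle reopen improving paths from descendants. I would address this with a post-order invariant: after \textsc{ImproveSubtree}$(v)$ terminates, no push-path rooted at any $u \in T_v$ is improving, and every later push from a proper ancestor of $v$ preserves this property. The preservation argument rests on two structural facts: since \textsc{Push-Path} is applied only when its return value equals $\epsilon$ exactly, the chosen $\epsilon$ is small enough that no $x_w$ on the path crosses its target $a_w$ (so the $\chi_w$ signs are preserved), and on a tight push-path the matched decrements between parent and child preserve tightness at every interior vertex (only the endpoint sees its slack reduced). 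Together these ensure that pre-existing non-improving push-paths from descendants remain non-improving.
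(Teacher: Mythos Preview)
Your overall strategy---LP duality, complementary slackness, and the identification of $\lambda_v$ with a signed count along $P_{r\to v}$ that ties directly to the push-path condition---is exactly the paper's. The invariant you isolate in the last paragraph (pushes from ancestors do not reopen improving pushes inside $T_v$) is the content of the paper's Lemma~\ref{lem1}, part~2, though the paper phrases it as ``children's subtrees remain optimal'' and proves it by a cost-accounting argument rather than by sign-preservation.

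Where your proposal diverges, and where there is a genuine gap, is in the construction of the dual certificate. You propose a single global top-down greedy: set $\sigma_v=-\chi_v$ at disequality vertices and, at equality vertices, pick $\sigma_v$ to keep the running partial sum as large as possible while $\le 0$. But the complementary-slackness condition~\ref{slackness3} requires $\lambda_w=0$ (not merely $\lambda_w\ge 0$) at every slack vertex and every leaf with $x_w>0$. Your greedy only enforces $\lambda_w\ge 0$; the no-improving-push inequalities $\sum_{y\in P}\chi_y\le 0$ bound the partial sum from one side only, and you give no argument for the other direction. Worse, a single choice of $\sigma_v$ at an equality vertex is shared by \emph{all} descendants of $v$, and two branches below $v$ can impose incompatible target values for $\lambda_v$ (one slack endpoint may need $\lambda_{A(\cdot)}=1$, another $\lambda_{A(\cdot)}=0$), so a top-down greedy cannot succeed without a further structural argument you have not supplied.

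The paper avoids exactly this difficulty by building the dual certificate \emph{inductively}, one subtree at a time. It assumes feasible dual solutions with complementary slackness already exist for each $T_u$, $u\in C(v)$; glues them under $v$ by uniformly shifting all descendant $\alpha$'s up by $1$ (this is what guarantees $\alpha\ge 0$ throughout, which your greedy does not); and then repairs the resulting violations of condition~\ref{slackness3} using Lemma~\ref{lem3}/Corollary~\ref{cor1}: the absence of improving push-paths from $v$ forces every slack vertex with $\alpha>0$ to have an equality ancestor, and decrementing $\alpha$ by $1$ on the subtrees hanging from the \emph{highest} such ancestors (the set $Y$) restores~\ref{slackness3} without breaking nonnegativity. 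This bootstrapping is the technical heart of the proof, and it is what your top-down sketch is missing.
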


Our proof of Theorem~\ref{thm:main} will proceed as follows. We begin by
characterizing the necessary push-path improvement at each step of the
while-loop. We then
inductively argue that before and after each push operation, the
value of the objective function for each sub-tree rooted in a child of the
current node remains optimal. We conclude by using an LP duality argument in order to
show that once no more push operations exist for the current vertex in the
for-loop, $T_v$ is assigned optimal $x$ values.

The following lemma refers to the series of improvements performed on
node $v$, and can be viewed as the set of invariants of the outer for-loop. 

\begin{lemma}
\label{lem1}
Let $v$ be the current node, $P=(v=u_0,\ldots,u_k)$ be a push-path such that for $1 \leq i \leq k$, $u_i \in C(u_{i-1})$. Then the following invariants hold throughout the execution of the inner while-loop:
\begin{enumerate}
\item
  If, for $\epsilon > 0$, $Push-Path(\mathbf{x}, P, \epsilon) > 0 $, then
  $Push-Path(\mathbf{x}, P, \epsilon) \leq \epsilon$. Furthermore, if for path $P$ and $\epsilon > 0$,
  $Push-Path(\mathbf{x}, P, \epsilon)=\delta > 0$, then there exists
  $\epsilon'>0$ such that $Push-Path(\mathbf{x}, P,\epsilon')=\epsilon'$.
\item If for path $P$ and $\epsilon>0$ $Push-Path(\mathbf{x}, P,
  \epsilon)=\epsilon$, then for each $u \in C(v)$,
  $T_u$ is optimally set before and after running
  $Push-Path(\mathbf{x}, P, \epsilon)$.
\end{enumerate}
\end{lemma}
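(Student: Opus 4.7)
The plan is to prove both invariants jointly by induction on the number of push operations executed within the while-loop at vertex $v$. In the base case (no push has yet occurred at $v$), Invariant 2 holds because each subtree $T_u$ for $u \in C(v)$ was optimized during an earlier iteration of the outer for-loop, since we traverse in post-order. For the inductive step, assume both invariants have held through all earlier pushes at $v$, and consider the next push along $P = (v = u_0, u_1, \ldots, u_k)$ with value $\epsilon > 0$. Unrolling the pseudocode, the decrement applied to $u_i$ satisfies $d_0 = \epsilon$ and $d_i = \max(0, d_{i-1} - s_{i-1})$, where $s_j := x_{u_j} - \sum_{w \in C(u_j)} x_w$ is the slack at $u_j$ just before the push; the per-node contribution is $\Delta_i := |x_{u_i} - a_{u_i}| - |x_{u_i} - d_i - a_{u_i}|$, bounded by $|\Delta_i| \leq d_i \leq \epsilon$.

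The central observation is that the inductive Invariant 2 applied to the subtree $T_{u_1}$, the unique child of $v$ lying on $P$, directly implies $\sum_{i \geq 1} \Delta_i \leq 0$: the push produces a feasible modification of $T_{u_1}$'s assignment, so the new $\ell_1$-objective of $T_{u_1}$ cannot drop below its pre-push optimum. Combined with $\Delta_0 \leq \epsilon$, this yields Part 1(a): the push returns $\Delta_0 + \sum_{i \geq 1} \Delta_i \leq \epsilon$. For Invariant 2 under the hypothesis that the push returns exactly $\epsilon$, both inequalities above must be tight, so $\Delta_0 = \epsilon$ and $\sum_{i \geq 1} \Delta_i = 0$. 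The latter says the post-push assignment of $T_{u_1}$ attains exactly the pre-push optimal objective, so $T_{u_1}$ remains optimally set. For $u \in C(v)$ with $u \neq u_1$, the push does not touch $T_u$, so its optimality is preserved trivially.

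Part 1(b) is subtler. I view the push operation's return value as a continuous piecewise-linear function of $\epsilon$ with integer slopes on each piece; breakpoints occur either where a cumulative slack $\sum_{j < i} s_j$ is exhausted (altering which decrements $d_i$ are nonzero) or where $x_{u_i} - d_i$ crosses $a_{u_i}$ (flipping the sign of $\Delta_i$). If the initial slope at $\epsilon = 0^+$ equals $+1$, then every sufficiently small $\epsilon'$ satisfies the conclusion. Otherwise the initial slope is $\leq 0$, and I argue using the inductive optimality of $T_{u_1}$ together with the ``return value $\leq \epsilon$'' bound from Part 1(a) that the first point at which the return value equals $\epsilon$ for positive $\epsilon$ occurs precisely at a breakpoint where the induced modification inside $T_{u_1}$ returns $T_{u_1}$ to optimality; the hypothesis that the return value is strictly positive somewhere guarantees such a breakpoint lies within the relevant range of $\epsilon$.

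The main obstacle is this piecewise-linear analysis for Part 1(b): one must rule out configurations where the return value is strictly positive yet never attains $\epsilon$. This requires carefully tracking the combined effect of slack-exhaustion and sign-change breakpoints on $\sum_{i \geq 1} \Delta_i$, and crucially leveraging the fact that $T_{u_1}$ was optimized as a free subproblem (root value unfixed) in its original iteration. The LP-duality machinery alluded to in the paper's overview becomes fully deployed later, in lifting the local optimality preserved at each push to the global $\ell_1$-optimality asserted by Theorem~\ref{thm:main}.
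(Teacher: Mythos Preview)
Your treatment of Part~1(a) and Invariant~2 matches the paper's argument: both use the inductive optimality of $T_{u_1}$ to conclude $\sum_{i\ge 1}\Delta_i \le 0$, hence the return value is at most $\epsilon$, and that equality forces $\sum_{i\ge1}\Delta_i = 0$, preserving the optimality of $T_{u_1}$ (with the other $T_u$ untouched). The paper states this more tersely, but the substance is identical; your explicit bookkeeping with the cascaded decrements $d_i$ and slacks $s_j$ is if anything more careful than the paper's prose.

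The divergence is in Part~1(b). You cast the return value as a piecewise-linear function of $\epsilon$, split on whether the slope at $0^+$ equals $+1$ or is $\le 0$, and then flag the second branch as ``the main obstacle'' without actually closing it. The paper sidesteps this case analysis with a direct counting claim (its Claim~\ref{cl0}): whenever $\text{Push-Path}(\mathbf{x},P,\epsilon)>0$ for some $\epsilon$, one has
\[
\bigl|\{j\in P: x_j > a_j\}\bigr| \;-\; \bigl|\{j\in P: x_j \le a_j\}\bigr| \;=\; 1,
\]
since a strictly larger gap would exhibit an improving push lying entirely inside $T_{u_1}$, contradicting its assumed optimality. With this count fixed at~$1$, the paper simply takes $\epsilon'$ to be the least positive amount before any $x_j$ crosses its $a_j$, and on that initial linear piece the return is exactly $\epsilon'$. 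In other words, the paper's sharper invariant forces your ``initial slope $=+1$'' branch to be the only one that occurs; your plan is not wrong in spirit, but the case you single out as the obstacle is precisely the case the counting argument rules out from the start.
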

\begin{proof}
  First, notice that the above invariants clearly hold if the current node $v$ is a leaf, as their initial $x$ values are set to their $a$ values, and will only be modified as a result of performing $Push-Path$ on their ancestors.
Assume that the invariants hold for all nodes preceding $v$ in the
post-order, and
suppose for contradiction that there exists some path
$P=(v=u_0,\ldots,u_m)$ and $\epsilon>0$ such that
$Push-Path(\mathbf{x}, P, \epsilon) > \epsilon$. \comment{Then since in
$Push-Path(\mathbf{x}, P, \epsilon)$ $x_v$ is reduced by $\epsilon$, it implies that an additional improvement is done on $T_{u_1}$ as well, thereby contradicting the optimality of $T_{u_1}$.}
The first part of the first invariant clearly holds since the sub-trees rooted in the children of $v$ are assumed optimal. Hence, any $\epsilon$-improvement on $v$ cannot entail an additional improvement on the rest of the push-path.

We now consider the second invariant, while briefly deferring the proof of the second part of the first invariant. 
First, notice that for each $\ell
\in C(v)-\{u_1\}$, the assignments to $T_{\ell}$ do not change. Let $P$ be a
modification-path, and $\epsilon>0$ such that $Push-Path(\mathbf{x},
P, \epsilon)=\epsilon$. On the other hand, notice that $x_v$ is
reduced by exactly $\epsilon$. This implies that $\| \mathbf{x}|_{T_{u_1}} -
\mathbf{a}|_{T_{u_1}} \|$ remains unchanged, thereby remaining optimal.

We now turn to the remaining part of the first invariant. Consider a modification-path $P$ 
and $\delta > 0$. By the first part of the invariant, $Push-Path(\mathbf{x}, P,\delta) \leq \delta$. 
If $Push-Path(\mathbf{x}, P,\delta)=\delta$, then the claim
holds trivially. Hence, assume $Push-Path(\mathbf{x}, P,\delta) < \delta$. 

We restrict ourselves
to dealing with $\delta$ values in the range $(0,x_v-a_v]$.
The following observation stems from the fact that during the
push operation, $x$ values along $P$ only decrease.

\begin{obs}
\label{obs0}
For path $P$ and $\epsilon>0$, if $Push-Path(\mathbf{x},P,\epsilon)>0$
\begin{equation}
  |\{j \in P: x_j > a_j \}| >  |\{j \in P: x_j \leq a_j \}|
\end{equation}
\end{obs}

In fact, using the induction hypothesis, we can make 
Observation~\ref{obs0} even stronger:
\begin{claim}
\label{cl0}
For path $P$ and $\epsilon>0$, if $Push-Path(\mathbf{x},P,\epsilon)>0$
\begin{equation}
  |\{j \in P: x_j > a_j \}| -  |\{j \in P: x_j \leq a_j \}|=1
\end{equation}
\end{claim}
Claim~\ref{cl0} can be justified by noticing that otherwise, the
sub-tree rooted in one of $v$'s children would be amenable to
path-improvements, contradicting optimality. The invariant follows, as we could simply set
$\epsilon$ to be the minimum (positive) amount that maintains the
number of nodes along $P$ with $x$ values that are larger than their
$a$ values.
\end{proof}

Lemma~\ref{lem1} implies that each push operation
improves the value of the objective function for the current
sub-tree, while maintaining the optimality of the sub-trees rooted in
the children of $v$. However, in order to show that the local optimum obtained by the
algorithm is the globally optimal feasible solution, we need
to argue that as long as the current assignment is not optimal, there
exists a feasible path-improvement with a corresponding $\epsilon>0$
value. The following theorem, which constitutes the main technical part of
this paper, formalizes this notion.

\begin{thm}
\label{thm:correctness}
Upon termination of the inner while-loop, the sub-tree rooted in vertex
$v$ is assigned optimal $x$ values.
\end{thm}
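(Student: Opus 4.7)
The plan is to prove optimality of $\mathbf{x}|_{T_v}$ via linear programming duality, by constructing a dual feasible solution whose value matches the primal objective achieved by the algorithm. First, I would cast the problem on $T_v$ as a linear program by introducing $y_i^+, y_i^- \ge 0$ with $x_i = a_i + y_i^+ - y_i^-$, minimizing $\sum_{i \in T_v}(y_i^+ + y_i^-)$ subject to the tree inequalities $x_w \ge \sum_{u \in C(w)} x_u$ (dual $\lambda_w \ge 0$ for each internal $w \in T_v$) and the non-negativity constraints $x_i \ge 0$ (dual $\nu_i \ge 0$). Eliminating the $x_i$'s, dual feasibility reduces to $|\lambda_w - \lambda_{A(w)} + \nu_w| \le 1$ for every $w \in T_v$, under the conventions $\lambda_\ell = 0$ at leaves $\ell$ and $\lambda_{A(v)} = 0$. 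Complementary slackness then requires (i) $\lambda_w > 0$ only at tight tree constraints, (ii) $\nu_i > 0$ only when $x_i = 0$, (iii) $\lambda_w - \lambda_{A(w)} + \nu_w = +1$ whenever $x_w > a_w$, and (iv) $\lambda_w - \lambda_{A(w)} + \nu_w = -1$ whenever $x_w < a_w$.

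I would then construct the dual variables by a top-down pass through $T_v$. At $v$ itself, the termination hypothesis forces either $x_v = a_v$ (in which case we set $\lambda_v = 0$) or $x_v > a_v$ together with a tight constraint at $v$ (otherwise the single-vertex push-path $(v)$ would strictly improve the objective, contradicting termination); in the latter case $\lambda_v = 1$ simultaneously satisfies CS and dual feasibility. For each non-root $w \in T_v$, CS forces $\lambda_w = \lambda_{A(w)} + 1$ when $x_w > a_w$ and $\lambda_w = \lambda_{A(w)} - 1$ when $x_w < a_w$, while leaving $\lambda_w$ free in $[\lambda_{A(w)} - 1, \lambda_{A(w)} + 1]$ when $x_w = a_w$. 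This freedom is exploited to pin $\lambda_w = 0$ at vertices whose tree constraint is slack (required by CS) and to preserve non-negativity as we descend. The $\nu_i$ are set to zero except at vertices with $x_i = 0$, where they absorb any residual discrepancy.

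The main obstacle is to show that the above construction is globally feasible: that the $\lambda_w$ remain non-negative along every root-to-descendant path, vanish at slack internal vertices, and meet the boundary conditions at leaves (where the dual constraint of a leaf $\ell$ translates into $\lambda_{A(\ell)} \le 1$, with equality when $x_\ell < a_\ell$). Each potential violation of these conditions corresponds to a valid push-path from $v$ ending at a leaf or a slack vertex, for which the improvement of a small push $\epsilon$ equals $\epsilon$ times the signed sum $|\{i : x_{w_i} > a_{w_i}\}| - |\{i : x_{w_i} \le a_{w_i}\}|$ along the path. The termination assumption rules out positive improvement, bounding this signed sum by $\le 0$; combined with the fine structure given by Claim~\ref{cl0}, this is exactly the inequality needed to prevent the dual violation. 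Once dual feasibility and complementary slackness are in hand, weak duality forces $\mathbf{x}|_{T_v}$ to be primal-optimal, establishing Theorem~\ref{thm:correctness}.
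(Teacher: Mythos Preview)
Your approach is essentially the paper's: LP duality, complementary slackness, and the observation that any obstruction to building a feasible dual certificate corresponds to an improving push-path from $v$. Your LP formulation (with $y^\pm$ and duals $\lambda,\nu$) differs only cosmetically from the paper's ($d$ and duals $\alpha,\beta$); after the obvious change of variables the two duals coincide, with your $\lambda_w$ playing the role of the paper's $\alpha_w$.

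The one genuine difference is in how the dual certificate is assembled. The paper does \emph{not} build $\alpha$ by a single top-down pass; it proceeds by induction on the subtree, assuming dual optima for each $T_u$, $u\in C(v)$, already exist, and then glues them together in three explicit stages ($\mathbf{s_1}\to\mathbf{s_2}\to\mathbf{s_3}$). The key technical lemma (Lemma~\ref{lem3}/Corollary~\ref{cor1}) pinpoints, for every node $j$ where condition~\ref{slackness3} is violated, an ancestor $i$ with $x_i=a_i$ and $\beta_i\in\{-1,0\}$ at which the free $\beta$-value can be spent; stage $\mathbf{s_3}$ then decrements all $\alpha$'s in $T_i$ by~$1$ for the maximal such $i$. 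Your top-down pass is morally the same construction, but your sketch leaves two points implicit that the paper makes explicit. First, you assert that the freedom at $x_w=a_w$ vertices suffices to simultaneously enforce $\lambda_w=0$ at slack vertices and $\lambda_w\ge 0$ everywhere; this is exactly the content of Lemma~\ref{lem3}, which you gesture at via Claim~\ref{cl0} but do not isolate. Second, when a slack internal vertex $k$ lies on $P_{v\to j}$, a push from $v$ does not propagate past $k$, so the direct ``violation $\Rightarrow$ push-path from $v$'' correspondence breaks; the paper closes this by appealing to the inductive optimality of the children's subtrees (so no improving push can start strictly below $v$), which your outline never invokes. Neither point is fatal, but both would need to be spelled out for the top-down argument to be rigorous.
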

\begin{proof}

First, notice that
the algorithm clearly maintains the feasibility of the solution
throughout its execution.
The following observation follows from the definition of the
algorithm.

\begin{obs}
\label{obs1} 
During the execution of the algorithm, $x_v \geq
  a_v$. Furthermore, if $x_v = a_v$, the solution is trivially optimal. 
\end{obs}

The proof of Theorem~\ref{thm:correctness} will proceed as follows. We
give the LP for the optimization problem, and its corresponding dual
LP. We then construct a feasible solution for the dual LP that satisfies
the complementary slackness conditions with respect to the solution of
the algorithm. In order to construct a valid dual solution, we inductively
bootstrap the dual solutions constructed for the nodes rooted sub-trees.
From LP duality, we then conclude that the two solutions are optimal
for the primal and dual problems. Recall that we inductively assume
that the sub-trees rooted in the children of $v$ are optimally adjusted.

\medskip

It is not hard to write a linear program which formulates our problem.
This program and its dual can be seen below. The variables $d_i$ are
introduced to avoid using absolute values in the objective function.\\
%
\hspace{\stretch{1}}
\begin{subequations}
\begin{minipage}{0.45\textwidth}
\begin{alignat}{2}
\min\quad     \sum_{i \in T_v}d_i &     \nonumber \\
\text{subject to}\quad d_i + x_i    & \ge  a_i   \label{bound1}\\
 d_i - x_i &\ge  - a_i \label{bound2}\\
x_i - \sum_{j \in C(i)}x_j &\ge 0 \label{bound3}\\
    x_i     & \geq 0 &\qquad& \forall i \in T_v\label{bound3p}
\end{alignat}
\end{minipage}
\end{subequations}
%
\hspace{\stretch{1}}
\begin{subequations}
\begin{minipage}{0.45\textwidth}
\begin{alignat}{2}
\max \quad    \sum_{i \in T_v}a_i(\lambda_i - \lambda'_i) &   \nonumber \\
\text{subject to}\quad   \lambda_i + \lambda'_i    & =  1 && \forall i \in T_v\label{bound4} \\
 (\lambda_i - \lambda'_i) + \alpha_i - \alpha_{p(i)} &\le  0 &&  \forall i \in T_v \backslash \{v\} \label{bound5}\\
 (\lambda_v - \lambda'_v) + \alpha_v &\le 0 \label{bound6} \\
    \lambda_i,\lambda'_i, \alpha_i & \geq 0 &\qquad& \forall i \in T_v
\end{alignat}
\end{minipage}
\end{subequations}
\hspace{\stretch{1}}
Note the special case for vertex $v$ (inequality \ref{bound6}). By denoting $\beta_i = \lambda_i - \lambda'_i$, one can simplify the dual LP:
\begin{subequations}
\begin{align}
\max \quad    \sum_{i \in T_v} a_i \beta_i  &     \nonumber \\
\text{subject to}\quad -1 \le \beta_i    & \le  1 \quad
&&\forall i \in T_v\label{bound7} \\
 \beta_i + \alpha_i - \alpha_{A(i)} &\le  0&& \forall i \in T_v-v \label{bound8}\\
\beta_v + \alpha_v &\le  0 \\
   \alpha_i & \geq  0 && \forall i \in T_v
\end{align}
\end{subequations}

We now summarize the necessary complementary slackness conditions
required by the dual:
\begin{align}
x_i &> a_i  \Rightarrow \lambda_i = 0, \lambda'_i = 1 
\quad (\beta_i=-1) \tag{C1} \label{slackness1} \\
x_i &< a_i \Rightarrow \lambda_i = 1, \lambda'_i = 0 
\quad (\beta_i=1) \tag{C2}\label{slackness2} \\
x_i &> \sum_{j \in C(i)}{x_j} \Rightarrow \alpha_i = 0 \tag{C3}
\label{slackness3} \\
x_i &> 0 \Rightarrow \lambda_i - \lambda'_i + \alpha_i -
\alpha_{p(i)} = 0 \tag{C4} \label{slackness4}
\end{align}
Since throughout the execution of the while loop $x_v \geq
a_v$ and the case where $x_v = a_ v$ is trivial, we will assume from
now on that $x_v > a_v$. This implies the last necessary condition:
\begin{equation}
x_v > a_v \Rightarrow \alpha_v = 1 \tag{C5}
\end{equation}

We begin by suggesting an initial assignment which might not be
feasible, and in addition, might violate one of the complementary
slackness properties.

The following lemma is a direct consequence of the construction of the
dual LP and the complementary slackness constraints. 
It refers to a family of assignments to the dual LP that satisfy 
a \emph{subset} of the complementary slackness conditions.

\begin{lemma}
\label{lem2}
Let $\mathbf{x,d}$ be a feasible solution for the primal such that the
sub-trees rooted in $v$ are optimally assigned and $v$ admits no
\emph{Push-Path} improvements. Let $\boldsymbol{\alpha,\beta}$
be an assignment for the dual variables such that the following holds:

\begin{equation}
\begin{matrix}
\left\{\begin{matrix}
\alpha_i = \alpha_{p(i)}-\beta_i, & \text{ if } x_i > 0\\ 
\alpha_i \leq \alpha_{p(i)}-\beta_i, & otherwise 
\end{matrix}\right. \;& \beta_i = \left\{\begin{matrix}
-1, & \text{ if } x_i > a_i \\ 
1, & \text{ if } x_i < a_i \\
\text{a value in }[-1,1], & \text{ if }x_i = a_i
\end{matrix} \right .
\end{matrix}
\end{equation}


Then $\boldsymbol{\alpha,\beta}$ satisfy all the properties of a feasible
dual solution, and $(\boldsymbol{\alpha,\beta})$ along with
$(\mathbf{x,d})$ satisfy complementary slackness \textbf{except} that
$\alpha_i$ might be negative for some nodes, and condition~\ref{slackness3}
could be falsified.
\end{lemma}

Next, we observe that if our modified dual LP admits an optimal
\emph{feasible} solution, then our range of possible values for
$\boldsymbol{\alpha,\beta}$ can be narrowed due the total unimodularity
of the simplified constraint matrix of the dual LP:
\begin{obs}
\label{obs2}
If the dual LP has an optimal and feasible solution, then it has an
integral, feasible and optimal solution, as well. In particular, for
every $i \in T_v$, $\beta_i \in \{-1,0,1\})$.
\end{obs}
Observation~\ref{obs2} can be verified by induction on the constraint
matrix of the dual LP, in order to show that every square sub-matrix
of it has a determinant of $\pm 1$.

The following lemma complements Lemma~\ref{lem2} by suggesting a
concrete assignment for each $\beta_i$ in the case whenever $x_i=a_i$.

\begin{lemma}
\label{lem3}
Consider an assignment as described in Lemma~\ref{lem2}. If we set $\beta_i = 1$ whenever $x_i =
a_i$, then:
\begin{equation*}
\forall j \in T_v, x_j > \sum_{k:\text{child of j}}x_k \Rightarrow
\alpha_j \leq 0
\end{equation*}
\end{lemma}
\begin{proof}

We prove the claim by way of contradiction. Suppose that the
claim is false, and let $j$ be the highest node for which the claim
does not hold. That is, $x_j > \sum_{k \in C(j)}x_k$ and
$\alpha_j > 0$. Consider $P_{j\rightarrow v}$, the path from $j$ to
$v$. As we are trying to prove an upper bound for
$\alpha_j$, we will assume that for every node $k$ on the path from
$v$ to $j$ $\alpha_k=\alpha_{p(k)} - \beta_k$, as lower values will
only strengthen our claim. This implies 
\begin{equation}
\alpha_j = - \sum_{k \in P_{j \rightarrow
    v}}\beta_k. 
\end{equation}
Since $\beta_i=1$ for all nodes $i$ such that $x_i \leq a_i$, and
$\beta_i=-1$ otherwise, $\alpha_j>0$ implies:
\begin{equation}
|\{k \in P_{j \rightarrow v}: x_i > a_i\}| > |\{k \in P_{j \rightarrow
  v}: x_i \leq a_i \}|
\end{equation}
This implies that we can reduce all $x$ values of nodes $P_{j \rightarrow v}$ 
by an amount of at most $x_j - \sum_{k \in C(j)}x_k$ 
so as to get a feasible solution with a better objective
function value. However, this is exactly a push operation, 
thereby contradicting the assumption of no further path-paths.
\end{proof}
The following corollary is the contra-positive statement of
Lemma~\ref{lem3}
\begin{cor}
\label{cor1}
If there exists a node $j \in T_v$ such that $\alpha_j >0$ and $x_j -
\sum_{k \in C(j)}x_k > 0$, then there exists an ancestor $i$ of
$j$ such that $\beta_i \in \{0, -1\}$ and $x_i = a_i$. 
\end{cor}

We now prove the main theorem by way of induction.
We inductively assume that the sub-trees rooted in $v$ have both an
optimal setting for the primal LP, and there exists an integral and feasible
solution for the dual LP that satisfy the
complementary slackness conditions.
Without loss of generality, we assume that
no child $i$ of $v$ has $x_i=0$, since otherwise, we could use its
assignments without any modifications, as $x_i$ does not harden the
feasibility constraints of $v$.

Consider the assumed set of assignments for the sub-trees rooted in
$v$. By the assumption, they have corresponding assignments to the
dual LPs. Observe that since the conditions listed in
Lemma~\ref{lem2} are a subset of the complementary slackness
conditions, Lemma~\ref{lem2} applies to them automatically.

We will start from a tentative solution to the dual by
 initially set the ($\boldsymbol{\alpha, \beta})$ 
according to the assumed assignments, and set
$\alpha_v=1,\; \beta_v=-1$. We let $\mathbf{s_1}$ denote the above
assignment. Notice that for each child $i$ of $v$, the dual LP that corresponds 
to the current assignment had
\begin{equation*}
\alpha_i + \beta_i = 0,
\end{equation*}
as $i$ was the root (this is a strict equality 
as by our assumption $x_i>0$). However, 
in the current LP, the corresponding dual inequality becomes
\begin{equation*}
\beta_i + \alpha_i - \alpha_v= 0,
\end{equation*}
As $\alpha_v=1$, this equality is therefore violated. In order to
rectify this, we first raise all the $\alpha$ value (except
$v$'s) by $1$, and denote the resulting solution by $\mathbf{s_2}$.
Note that by the feasibility of the original assignments to the
sub-trees and by the definition of $\mathbf{s_2}$ all the nodes in
$T_v$ have non-negative $\alpha$ values.
Also observe that $\mathbf{s_2}$ now has all the properties listed in
Lemma~\ref{lem2}. Thus, by Lemma~\ref{lem2}, we can conclude that
$\mathbf{s_2}$ is a feasible solution to the dual LP, and
$\mathbf{s_2}$ along with $\mathbf{(x,d)}$ satisfy complementary
slackness except that complementary slackness condition~\ref{slackness3} might be violated.

Our next step would be to adjust $\mathbf{s_2}$ so as to fix any
violation condition of condition~\ref{slackness3}. Let $W$ be the set of all
infeasible nodes:
\begin{equation}
W = \{ j \in T_v: x_j > \sum_{k \in C(j)}x_k \text{ and } \alpha_i > 0\}
\end{equation}
By Corollary~\ref{cor1}, for each $j \in W$ there exists an ancestor $i$ such that (1)
$x_i=a_i$ and (2) $\beta_i \in \{-1,0\}$. We let 
\begin{equation}
X = \{i \in T_v:
x_i = a_i, \beta_i \in \{-1,0\}\}
\end{equation}
Moreover, we let 
\begin{equation}
Y=\{i \in X:
\text{there is no ancestor of $i$ in $X$}\}
\end{equation}
Thus, for each node $j \in W$ there exists an ancestor $i \in Y$. 

We now define the final solution to the dual LP. Define assignment $\mathbf{s_3}$ to the dual LP for $T_v$ by taking
solution $\mathbf{s_2}$ with the following modifications:
\begin{enumerate}
\item $\forall k \in T_i$, such that $i \in Y$, subtract $\alpha_k$
  by $1$. \label{step1}
\item $\forall i \in Y$ add 1 to $\beta_j$. \label{step2}
\end{enumerate}
Increasing the $\beta_j$ values by 1 makes sure that complementary
slackness condition~\ref{slackness4} is satisfied after applying the first
step. Applying the first modification step guarantees that
complementary slackness condition~\ref{slackness3} is again satisfied, as all
nodes in $W$ undergo the first modification. Observe that by
definition, all the sub-trees rooted in nodes in $Y$ are pair-wise
disjoint. Hence,
each $\alpha$ value can be decremented at most once. Also observe that
in addition to nodes in $W$, other nodes may have their $\alpha$
values decremented. However, as by the definition of $W$, these nodes
do not need to maintain condition~\ref{slackness3}, and thus this step will not
violate their constraints. In addition, their $\alpha$ values are
guaranteed to remain non-negative as they were previously incremented
by 1.

In conclusion, all of the complementary slackness conditions for the
dual LP now hold for $\mathbf{(x,d)}$ and $\mathbf{s_3}$. Therefore,
$\mathbf{(x,d)}$ is an optimal solution for $T_v$.
\end{proof}


\section{The Algorithm}\label{sec:algfinal}
Given the general technique presented in Algorithm~\ref{alg1}, we conclude
our results by giving an $O(n^2)$ algorithm that follows the spirit
of improving by pushing the surplus from a given vertex downward, along a
path. Recall that the algorithm \textit{Push-Improve} performs the
push operations \emph{one} path at a time. Instead, we can
leverage the fact that some paths can share the same
prefix. Specifically, instead of the inner while-loop,
executed for each node $v$ in the tree, 
we introduce a depth-first-search algorithm in which for 
each node $j \in T_{v}$, the algorithm remembers the maximal
amount, pushable  through $P_{v\rightarrow j}$. 

We make use of two measures, defined for each node $u \in
T_{v}$. Let $\delta_u = |\{ j \in P_{v \rightarrow u}: x_j > a_j \}| -
|\{ j \in P_{v \rightarrow u}: x_j \leq a_j \}|$. In other words, for
any push operation along $P_{v \rightarrow u}$,
$\delta_u$ is the difference between the number of nodes that will
improve the objective function value, and the number of nodes that
will worsen the objective function value, if we push a small enough
value through $P_{v \rightarrow u}$. Additionally, we define
the positive bottleneck along $P_{v \rightarrow u}$ as $\epsilon_u =
\min_{j \in P_{v \rightarrow u}}\{ x_j - a_j: x_j > a_j
\}$. This is the maximum $\epsilon$ we can push on the path $P_{v \rightarrow u}$ while
gaining exactly $\delta_u \epsilon$ in the objective function. 
In order to maintain feasibility, we restrict $\epsilon_u$ to be \emph{no more} than $x_k$, for any node $k$ on
$P_{v\rightarrow u}$. 
This value will have a similar function as the $\epsilon$ value
given in Algorithm~\ref{alg1}. That is, for the current node $v$, and
a successor $u$, $\epsilon_u$ will serve as the amount of
\emph{excess} we push through $P_{v \rightarrow u}$. Our algorithm will maintain feasability by
restricting the decrease in $x_u$ by the sum of the decreases made on its direct
children of $u$ (unless $x_u$ was strictly bigger than the sum of the $x$'s of its children before the decrease.)

The final algorithm for optimizing the assignment to $T_v$, can
be seen as Algorithm~\ref{alg2} in Appendix~\ref{app:1}.
It differs from Algorithm~\ref{alg1} in the way the sub-tree
$T_{v}$ is modified for each node $v$. 

The following theorem states that Algorithm~\ref{alg2} is optimal.
\begin{thm}
\label{thm:dfs}
When Algorithm~\ref{alg2} leaves node $v \in V$, there is no push-path
going from the root $r$, ends at a leaf, and passes through $v$.
\end{thm}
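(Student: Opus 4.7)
The plan is to prove the statement by induction on the post-order in which the DFS backtracks out of vertices, within the iteration of the outer loop whose starting point I will call $r$ (so that throughout this argument $r$ denotes the root of the current DFS — equivalently, the node currently being processed by the outer for-loop of the main algorithm). The key invariants that the algorithm maintains are the quantities $\delta_u$ and $\epsilon_u$ along $P_{r \rightarrow u}$ introduced just before the theorem; these summarise, respectively, the marginal gain per unit push and the feasible amount of a push along $P_{r\rightarrow u}$. The core of the proof will be to show that the backtracking step at $v$ only fires once these quantities certify that every path extending $P_{r\rightarrow v}$ down to a leaf in $T_v$ has been saturated.

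For the base case I take $v$ to be a leaf of $T$. The only push-path from $r$ through $v$ ending at a leaf is $P_{r\rightarrow v}$ itself. Immediately before the DFS backtracks from $v$, the algorithm inspects $\delta_v$ and $\epsilon_v$; whenever $\delta_v \geq 1$ and $\epsilon_v>0$ it performs a push of size $\epsilon_v$ (clipped by feasibility) along $P_{r\rightarrow v}$ and re-evaluates. I would argue that each such maximal push either forces the new bottleneck $\epsilon_v$ to drop to $0$ or flips some node on the path from the ``$x_j>a_j$'' side to the ``$x_j\leq a_j$'' side of the partition defining $\delta_v$; hence after finitely many such pushes no further beneficial push on $P_{r\rightarrow v}$ is possible, which is exactly the required statement for $v$.

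For the inductive step, let $v$ be internal with children $c_1,\dots,c_k$ in the DFS order. Any candidate push-path from $r$ through $v$ to a leaf is of the form $P_{r\rightarrow v}$ concatenated with $P_{v\rightarrow \ell}$ for some leaf $\ell\in T_{c_i}$. By the induction hypothesis, no such path existed when the DFS left $c_i$. The only way this could be invalidated by the time we backtrack from $v$ is if a subsequent push performed while exploring some later $c_j$ (with $j>i$) altered the $x$-values on the shared prefix $P_{r\rightarrow v}$. The crucial monotonicity observation is that every push only \emph{decreases} $x$-values, which can only non-strictly decrease $\delta$ along any path sharing this prefix and non-strictly decrease the relevant slacks $x_j-a_j$ contributing to $\epsilon$; so the certificate ``$\delta\leq 0$ or $\epsilon=0$" established at $c_i$ is preserved, and the corresponding claim for $c_j$ follows from applying the inductive hypothesis at $c_j$.

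The main obstacle, I expect, will be the bookkeeping step connecting the algorithm's incrementally maintained $\delta_u,\epsilon_u$ to the quantities one would compute from scratch on $P_{r\rightarrow u}$ after arbitrary preceding pushes. Concretely, I need a lemma asserting that when the DFS enters $u$ from its parent, the locally inherited $\delta,\epsilon$ truly reflect the cumulative effect of every push done so far on $P_{r\rightarrow \text{parent}(u)}$; granting this, the local backtracking criterion (no $u$-descendant offers an improving push through $u$) coincides with the global non-existence statement in the theorem. Combining the resulting theorem with Theorem~\ref{thm:correctness} — which guarantees that exhaustion of push-paths from the current outer-loop node implies optimality of $T_r$ — then yields correctness of Algorithm~\ref{alg2}, while the DFS structure gives the claimed $O(n^2)$ running time.
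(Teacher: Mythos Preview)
Your proposal is correct and takes essentially the same approach as the paper: both argue by induction over the order in which the DFS backtracks (the paper phrases it as induction on height, which is equivalent), and both hinge on the monotonicity observation that pushes only decrease $x$-values, so the quantity $\delta^*_u$ along any root-to-$u$ path can never increase once a subtree has been left. The paper's proof is terser---it states the monotonicity observation and then dispatches the induction in two sentences---whereas you spell out the base case and flag the bookkeeping lemma about the incrementally maintained $\delta_u,\epsilon_u$ matching their from-scratch values; the paper simply glosses over this, so your extra care there is warranted but does not change the underlying argument.
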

\begin{proof}
First, we note the following observation, which suggests that the
potential for improvement on any path
from the root to a node $v$ cannot increase.
\begin{obs}
Let $u$ be a node in $T$. Let
\begin{equation*}
\delta^*_u = |\{i \in P_{r\rightarrow u}: x_v > a_v \}| - |\{i \in P_{r\rightarrow u}: x_v \leq a_v \}|.
\end{equation*} 
\label{obs:dfs}
$\delta_u^*$ does not increase throughout the execution of the algorithm.
\end{obs}
The observation follows immediately from the fact that the only
modifications to the $x$ values of the nodes are decreases.

We proceed to prove the lemma by induction on the height $h$ of the node
$v$. For $h=0$ (leaves), the claim is trivial. 
Assume the claim holds for $h=k$, and let $v$ be a
node of height $k+1$. The claim follows immediately from Observation~\ref{obs:dfs}: no sub-tree rooted in a
child of $v$ can be improved as a result of a push-path through
it. Additionally, the path from $r$ to $v$ never becomes amenable to
improvements through push operations, once the algorithm leaves $v$. This concludes
the proof.
\end{proof}

\paragraph{Running Time} The algorithm essentially performs a
depth-first-search for every node $v$ on the tree. 
Therefore, the running time of the algorithm is $O(n^2)$.

\section{Conclusions and Future Work}\label{sec:conclusion}
We have demonstrated the technical difficulties that our problem entails, as
well as an efficient method for handling a broad class of instances of the problem.
Due to their high efficiency, our methods can be run on relatively large instances in practice.
We also believe that our algorithm might be applicable to settings beyond recommendation systems.

An immediate open question is to extend our algorithm to the case of general
DAGs. It seems that one needs some new ideas to give a combinatorial algorithm
for this general case. In fact even a (fast) approximation algorithm for this
case seems to be beyond the reach of our techniques.
Another interesting direction would be to consider other 
measures such as the $\ell_2$-norm. Due to the fundamental difference
between the $\ell_1$ and $\ell_2$ norms, we suspect that this
different distance measure will require a completely different approach.

In addition to considering alternative objective functions, we can also
consider other constraints. For instance, we can consider
comparing the value assigned to each node to the \emph{average} value of its
children. 
Another type of constraint would be to require equality between
the value of a node, and the sum of the values of its children.


\bibliographystyle{alpha}
\bibliography{Regression}  
\appendix
\section{The Weighted $\ell_1$-Norm Case}
\label{sec:weighted}

We now discuss the case where the nodes on the given tree can
have varying levels of importance with respect to the objective function. Specifically,
as done in related studies, we consider the case in which for each node $i \in V$, there is an associated weight
$w_i$. For ease of presentation, we assume that all weights are integral, i.e. $\mathbf{w} \in \mathbb{N}_{\geq}^{n}$. The objective function will be
$g(\mathbf{x}) =\sum_{i \in V} w_i \cdot |a_i - x_i|$.

Hence, we can simply reinterpret the variable 
$\delta_i$ defined for Algorithm~\ref{alg2} as the weighted 
balance between the nodes which will benefit, and the nodes 
that will ``suffer'' as a result of a Push-Path operation. More precisely,
when considering the path $P_{v \rightarrow i}$, we will compute 
$\delta_i = \sum_{j \in P_{v \rightarrow i}:x_j>a_j}w_j - \sum_{j \in P_{v \rightarrow i}:x_j \leq a_j}w_j$.
Therefore, a feasible push-improvement across a path $P_{v \rightarrow u}$ would lead to an improvement if and only if $\delta_u>0$. 

The above discussion leads to the following simple modification to
procedure \texttt{SetParams}, given in Algorithm~\ref{alg3}.
\begin{algorithm}[ht]
\KwIn{Vertex $v$}
\BlankLine
\SetKwBlock{Begin}{begin}{end}
\SetKwFunction{KwSetParams}{Set-Params}
\label{alg3}
  \KwSetParams{Vertex $i$, Non-negative integer $\delta$, Non-negative
  real-value $\epsilon$} \\
\Begin{
  \If{$x_i > a_i$}{
    $\delta_i \leftarrow \delta + w_i, \epsilon_i \leftarrow \min\{\epsilon, x_i - a_i\}$
  }
  \Else{
    $\delta_i \leftarrow \delta -w_i, \epsilon_i \leftarrow \min \{x_i, 
    \epsilon \}$ \\
  }
}
\caption{The modified \texttt{Set-Params} procedure for the weighted case}
\end{algorithm} 
Note that the weighted case reduces to the unweighted case by setting the weights to $1$.
Clearly, the algorithm has the same $O(n^2)$ running time of the original algorithm. 
The following theorem argues about the optimality of the modified algorithm:
\begin{thm}
The algorithm resulting from the modification given in Algorithm~\ref{alg3} obtains the optimal
weighted-$\ell_1$ objective function value. 
\end{thm}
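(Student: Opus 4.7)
The plan is to repeat the LP duality proof of Theorem~\ref{thm:correctness}, substituting a weighted primal LP in place of the unweighted one and tracking how the weights propagate through the dual. Specifically, I would write the primal as minimizing $\sum_{i \in T_v} w_i d_i$ subject to the same constraints \eqref{bound1}--\eqref{bound3p}. Dualizing changes only the constraint $\lambda_i + \lambda'_i = 1$ to $\lambda_i + \lambda'_i = w_i$, so setting $\beta_i = \lambda_i - \lambda'_i$ gives $\beta_i \in [-w_i, w_i]$, and the rest of the dual is structurally identical. The complementary slackness conditions \eqref{slackness1}--\eqref{slackness4} then become $x_i > a_i \Rightarrow \beta_i = -w_i$, $x_i < a_i \Rightarrow \beta_i = w_i$, and $x_i = a_i$ leaves $\beta_i$ free in $[-w_i, w_i]$.

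Next, I would establish the weighted analog of Lemma~\ref{lem1}. The only substantive change is that a push of $\epsilon$ along $P_{v\rightarrow u}$ now changes the objective by exactly $\delta_u \cdot \epsilon$, where $\delta_u = \sum_{j \in P_{v\rightarrow u}: x_j > a_j} w_j - \sum_{j \in P_{v\rightarrow u}: x_j \le a_j} w_j$, which is precisely the quantity maintained by the modified \texttt{Set-Params}. Using this, the invariant that no further feasible push-path with positive gain exists at the end of the inner loop translates to: for every node $u$ in $T_v$ with $\epsilon_u > 0$, $\delta_u \le 0$.

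Then I would mirror the bootstrap construction from Theorem~\ref{thm:correctness}. Inductively assuming each sub-tree rooted at a child of $v$ has a matching integral dual solution, I would set $\alpha_v = w_v$, $\beta_v = -w_v$, and increment every other $\alpha_i$ in $T_v$ by $w_v$ to satisfy the dual constraint at the children of $v$. This gives a candidate solution $\mathbf{s_2}$ satisfying all slackness conditions except possibly \eqref{slackness3}. For each violating node $j \in W$, the weighted analog of Corollary~\ref{cor1}, proved exactly as in Lemma~\ref{lem3} but with weighted sums $-\sum_{k \in P_{j\rightarrow v}} \beta_k > 0$ forcing a positive-$\delta$ push-path, guarantees an ancestor $i$ with $x_i = a_i$ and $\beta_i \in [-w_i, 0]$. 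I would then pick the topmost such ancestors $Y$ and perform the same two-step fix as before: subtract $w_v$ from every $\alpha_k$ in the sub-tree rooted at $i \in Y$, and add $w_v$ to each $\beta_i$ for $i \in Y$ (staying within $[-w_i, w_i]$ provided weights dominate appropriately).

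The main obstacle is the last step: in the unit-weight proof, Observation~\ref{obs2} used total unimodularity to assert integral duals and single-unit $\beta$-adjustments, which made the ancestor-fix operation safe in one shot. In the weighted case, a single ancestor $i \in Y$ may need $\beta_i$ shifted by an amount depending on how many descendants in $W$ it covers, and one must check that $\beta_i + w_v \le w_i$ and that non-negativity of every $\alpha$ is preserved after subtraction. The cleanest way to handle this is to apply the construction recursively on the sub-trees rooted at $Y$ rather than globally, so that the shift magnitudes match the local weights; alternatively, one can note that since weights are integral, a scaled constraint matrix remains totally unimodular and an integral dual still exists, reducing the weighted case to a book-keeping exercise on top of the unweighted argument.
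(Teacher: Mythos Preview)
Your approach is genuinely different from the paper's. The paper does not redo the LP duality argument with weights at all; instead it reduces the weighted problem on $T$ to the unweighted problem on an auxiliary tree $\tilde T$, obtained by replacing each node $i$ of weight $w_i$ by a chain of $w_i$ unit-weight copies. It then argues (i) the optimal weighted objective on $T$ equals the optimal unweighted objective on $\tilde T$ (via the observation that in an optimal solution for $\tilde T$ all copies in a chain receive the same value), and (ii) the modified weighted algorithm on $T$ simulates the original unweighted Algorithm~\ref{alg2} on $\tilde T$, so their objective values coincide. Correctness then follows from Theorem~\ref{thm:correctness} applied to $\tilde T$. This reduction sidesteps any need to rebuild the dual certificate.

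Your direct-dual route is natural, but the obstacle you flag is real and your proposed fixes do not close it. In the fix-up step you add $w_v$ to $\beta_i$ for each $i\in Y$; you need $\beta_i + w_v \le w_i$, and since you only know $\beta_i \le 0$, this requires $w_v \le w_i$, which is not guaranteed. Your second fix---total unimodularity---only tells you an integral optimal dual \emph{exists}; it does not help you \emph{construct} one by the single-shot increment-and-decrement surgery of the unweighted proof, which is precisely what breaks. Your first fix---recurse on the sub-trees rooted at $Y$---is the right instinct (essentially, when $w_v > w_i$ you saturate $\beta_i$ at $w_i$, absorb $w_i$ units of the excess, and push the remaining $w_v - w_i$ further down), but making this rigorous requires a new induction on the residual excess and a check that each partial subtraction keeps all $\alpha$'s non-negative and all slackness conditions intact. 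That argument can be carried out, but it is substantially more work than you indicate; the paper's chain-expansion trick is exactly the device that turns this multi-step push into the already-proved single-step case.
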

\begin{proof}
In order to argue
about the correctness of the modified algorithm, we compare the objective function obtained by the algorithm
to the one obtained by the original algorithm, on an equivalent unweighted tree.

\paragraph{The construction} We construct the tree $\tilde T=(\tilde V, \tilde E)$ by 
replacing each node $i$ with a chain $i_1,\ldots,i_{w_i}$, 
such that for any $1 \leq j < w_i$, $(i_j, i_{j+1}) \in \tilde E$. Additionally, we set for
children $k \in C(i)$ $(k_{w_k}, i_{1}) \in \tilde E$, and for $i$'s parent $\ell$ $(i_{w_i}, \ell_1) \in \tilde E$.

It is easy to see that $\tilde T$ is a tree. 
Notice that $\tilde T$ might be arbitrarily large (according to the weights). 
However, it is used only for the sake of proof of correctness, and never actually constructed
by the algorithm.
The following proof sketch highlights the equivalence of the uniform weight
case to the weighted case.
\begin{claim}
\label{cl4}
Let $\mathbf{x}$ and $\mathbf{\tilde x}$ be optimal assignments for $T$ and $\tilde T$, respectively. Then
$g(\mathbf{x}) = f(\mathbf{\tilde x})$. 
\end{claim}
The following immediate observation, which follows from the construction of $\tilde T$, implies the above claim.
\begin{obs}
\label{obs4}
Let $\mathbf{\tilde x}$ be an optimal assignment for $\tilde T$. 
Then for any chain $(i_1, \ldots, i_{w_i})$ that corresponds to vertex $i$ in $T$:
\begin{equation*}
x_{i_1} = x_{i_2} = \ldots = x_{i_{w_i}} 
\end{equation*}
\end{obs}
The following claim complements claim~\ref{cl4}:
\begin{claim}
\label{cl5}
Let $\mathbf{x}$ and $\mathbf{\tilde x}$ be the feasible assignments returned by Algorithm~\ref{alg2} and
the modified algorithm for weighted trees, respectively. Then
\begin{equation*}
g(\mathbf{x}) = f(\mathbf{\tilde x})
\end{equation*}
\end{claim}
\end{proof}


\section{The $\ell_{\infty}$-norm}\label{sec:ellinfty}
We now turn our attention to the case of the $\ell_{\infty}$-norm; i.e. minimizing
the maximal difference $max_{u \in V}|a_i - x_i|$. In contrast to the case of the $\ell_1$-norm,
this optimization problem can be solved in a straightforward manner by using dynamic programming, even
when the underlying graph is a directed acyclic graph.

For a given value $t \geq 0$, the algorithm will go over all nodes and tries to produce an assignment of objective value
at most $t$. We can show that if the algorithm fails then there is no valid assignment of objective value at most $t$.
To find the optimal objective value then one only needs to run a binary search on the variable $t$.

\begin{algorithm}
\label{alg-infty}
\KwIn{DAG G=(V,E), vertices $1,\ldots,n$ sorted in topological order, vertex weight vector $\mathbf{a}$.}

\SetKwBlock{Begin}{begin}{end}

\For{$i \leftarrow 1 \text{ to } n$}{
$x^{min}_i \leftarrow \max\{0,  \sum_{j \in C(i)}x^{min}_j, a_i - t\}$ \label{line:alg-infty:maxLine}
}
\Return{ $\mathbf{x}^{min}$}
\caption{The dynamic programming algorithm for the $\ell_{\infty}$-norm case}
\end{algorithm}

As mentioned above, we perform a binary search on $t$ in the range $[0,\sum_i a_i]$.
Clearly, for an instance of the problem with optimal solution value $\tau$, 
the running time of Algorithm~\ref{alg-infty} would be $O(n \cdot log \tau)$. We now briefly outline the proof of correctness
of the algorithm.

\begin{thm}
For any given $t \geq 0$, $\mathbf{x}=\mathbf{x}^{min}$ is a valid solution. Furthermore, if
$||\mathbf{x}-\mathbf{a}||_{\infty} > t$, then there does not exist a valid solution $\mathbf{x'}$ such that
$||a-x'||_{\infty} \leq t$.
\end{thm}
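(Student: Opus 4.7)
The plan is to establish both parts of the theorem through one lower-bound induction on the topological order of the vertices. The key observation is that the recurrence
\[
x^{min}_i = \max\Bigl\{0,\ \sum_{j \in C(i)} x^{min}_j,\ a_i - t\Bigr\}
\]
is not arbitrary: each of the three arguments of the max is a \emph{forced} lower bound on $x'_i$ for any feasible assignment $\mathbf{x}'$ with $\|\mathbf{x}'-\mathbf{a}\|_\infty \le t$. So $\mathbf{x}^{min}$ is, coordinate-wise, the smallest possible feasible assignment compatible with the distance constraint $\|\cdot - \mathbf a\|_\infty \le t$.

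First I would verify feasibility of $\mathbf{x}^{min}$ directly from the update rule: the max with $0$ gives non-negativity, and the max with $\sum_{j \in C(i)} x^{min}_j$ gives the sum constraint at each vertex $i$. No induction is needed here since the vertices are processed in topological order, so all children have already been assigned when $i$ is handled.

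Next I would prove the crucial minimality statement: for any assignment $\mathbf{x}'$ with $x'_i \ge 0$, $x'_i \ge \sum_{j \in C(i)} x'_j$ for all $i$, and $\|\mathbf{x}' - \mathbf{a}\|_\infty \le t$, we have $x'_i \ge x^{min}_i$ for every $i$. The proof goes by induction on position in the topological order. In the inductive step, the distance bound gives $x'_i \ge a_i - t$, non-negativity gives $x'_i \ge 0$, and the sum constraint together with the induction hypothesis gives $x'_i \ge \sum_{j \in C(i)} x'_j \ge \sum_{j \in C(i)} x^{min}_j$. Taking the maximum of the three lower bounds yields exactly $x^{min}_i$.

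The theorem then follows immediately. If $\|\mathbf{x}^{min} - \mathbf{a}\|_\infty > t$, note that the construction forces $x^{min}_i \ge a_i - t$ always, so the violation must occur on the upper side: there exists some $i$ with $x^{min}_i > a_i + t$. By the minimality lemma, any feasible $\mathbf{x}'$ satisfying $\|\mathbf{x}' - \mathbf{a}\|_\infty \le t$ would also have $x'_i \ge x^{min}_i > a_i + t$, contradicting $|x'_i - a_i| \le t$. Hence no such $\mathbf{x}'$ exists. The only part that requires any subtlety is unpacking which of the three sources of slack (non-negativity, children sum, or the distance floor $a_i - t$) is the ``tight'' one at any given vertex; since we need the inequality $x'_i \ge x^{min}_i$ uniformly, taking the maximum neatly handles all three simultaneously, so I do not anticipate a genuine obstacle.
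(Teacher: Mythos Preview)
Your proposal is correct and follows essentially the same approach as the paper: validity of $\mathbf{x}^{min}$ from the update rule, then a minimality lemma (for any feasible $\mathbf{x}'$ with $\|\mathbf{x}'-\mathbf{a}\|_\infty \le t$ one has $x'_i \ge x^{min}_i$ for all $i$) proved by induction on the topological order, and finally the contradiction via $x'_i \ge x^{min}_i > a_i + t$. The paper's writeup is slightly terser (it states the contrapositive $a_i - t \le x^{min}_i \le x'_i \le a_i + t$ directly), but the argument is identical.
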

\begin{proof}
  The validity of $\mathbf{x}$ follows from definition. To prove the second part we show the following simple lemma.
  \begin{lemma}
    If $\mathbf{x'}$ is a valid solution and
    $||\mathbf{a}-\mathbf{x'}||_{\infty} \leq t$, then for all $i$,
    $x'_i \geq x^{min}_i$
  \end{lemma}
  \begin{proof}
    The proof follows with a simple induction on $i$. Note that because $||\mathbf{a}-\mathbf{x'}||_{\infty} \leq t$, $x'_i \geq a_i - t$.
    Furthermore, $\mathbf{x'}$ is a valid solution so $x'_i \geq 0$ and
    \begin{align*}
      x'_i &\geq \sum_{j \in C(i)} x'_j \geq \sum_{j \in C(i)} x^{min}_j = x^{min}_i,
      \intertext{where we have used the induction hypothesis for the second inequality. It then follows that,}
      x'_i &\geq \max\{0,  \sum_{j \in C(i)}x^{min}_j, a_i - t\} = x^{min}_i.\qedhere
    \end{align*}
  \end{proof}
Now assume that there is a valid solution $\mathbf{x'}$ with objective value at most $t$. It follows that for all $i$,
$$ a_i - t \leq x^{min}_i \leq x'_i \leq a_i + t,$$
that is, $||\mathbf{x}^{min} - \mathbf{a}||_\infty \leq t$.
\end{proof}


\section{Hardness of Approximation in General Graphs}\label{sec:hardness}
As mentioned before when the objective value is the $\ell_1$ norm of the difference between the ${\bf x}$ and ${\bf a}$ vectors the
(most general case of the) problem can be solved exactly in polynomial time by solving a linear program. In fact, it is not hard to
see that using a similar approach one can solve this general case of the problem for \emph{any} $\ell_p$ norm with
$1 \leq p \leq \infty$. The only difference is that one has a linear program with infinitely many facets which has an efficient
separation oracle and can be solved with the Ellipsoid method. 

For an instance where all the input values are integral, one might ask whether the task of finding an
optimal \emph{integral} solution is tractable or not. This is especially interesting for the $\ell_1$ case, since in the case of
trees, an integral solution can be found efficiently by our algorithm
of Section~\ref{sec:algabs}, if the initial $a$ values are integrals.
Unfortunately, as soon as one considers the DAG case (even the special case of layered dags) this problem becomes intractable for
essentially all $\ell_p$ norms. The following theorem summarizes our hardness results.

\begin{thm}
Unless $NP \subseteq TIME(n^{O(\log\log n)})$ it is NP-hard to approximate the Integral Isotonic Regression problem for the case of
directed acyclic graphs better than $\Theta(\left(\log n\right)^{1/p})$ for the $\ell_p$ norm.
\end{thm}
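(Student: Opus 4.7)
The plan is to reduce from Set Cover, using the $\ell_p$-norm to absorb a $p$-th root into the hardness exponent. By Feige's theorem, Set Cover on a universe of $N$ elements is $NP$-hard to approximate within a factor $(1-o(1))\ln N$ unless $NP\subseteq TIME(N^{O(\log\log N)})$. Given an instance $(U,\mathcal{S})$ with $U=\{u_1,\dots,u_N\}$ and $\mathcal{S}=\{S_1,\dots,S_M\}$, the goal is to build a DAG $G$ and target vector $\mathbf{a}$ whose optimal integral $\ell_p$-SBHSP value is $\Theta((K^*)^{1/p})$, where $K^* = \mathrm{OPT}(SC)$. Because an $\alpha$-approximation for the integral problem then yields an $\alpha^p$-approximation for Set Cover, the claimed $\Theta((\log n)^{1/p})$ hardness (with $n$ polynomial in $N,M$) follows immediately.

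The construction I have in mind introduces a ``set node'' $\sigma_i$ for each $S_i$ with target $a_{\sigma_i}=0$, and for each element $u_j$ a forcing gadget $\Gamma_j$ hanging appropriately off the sets $\{\sigma_i : u_j\in S_i\}$. The gadget is engineered so that in any integral feasible solution it contributes zero cost whenever at least one $\sigma_i$ covering $u_j$ satisfies $x_{\sigma_i}\geq 1$, and prohibitive cost otherwise, enforced by replicating gadget nodes sufficiently many times to simulate large weights in the manner of the weighted-$\ell_1$ variant discussed in the paper's appendix. Auxiliary target values are chosen so that inflating any $x_{\sigma_i}$ beyond $1$, or using multiple redundant covering sets, also incurs extra cost, guaranteeing that optimal integral solutions correspond to inclusion-minimal covers of $U$.

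I would argue correctness in two directions. Completeness: given a cover $T^*$ of size $K^*$, setting $x_{\sigma_i}=1$ for $i\in T^*$ and $x_{\sigma_i}=0$ otherwise and then filling in the gadget's values canonically produces a feasible integral solution whose error vector has exactly $K^*$ coordinates of magnitude $1$ and zeros elsewhere, giving $\ell_p$-cost $(K^*)^{1/p}$. Soundness: let $\mathbf{x}$ be any feasible integral solution of $\ell_p$-cost $C$. The gadget's forcing property ensures that $T=\{i : x_{\sigma_i}\geq 1\}$ covers $U$; since $a_{\sigma_i}=0$ and each integer $x_{\sigma_i}\geq 1$ with $i\in T$ contributes at least $1$ to $\|\mathbf{x}-\mathbf{a}\|_p^p$, we obtain $|T|\leq C^p$. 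Hence a factor-$\alpha$ approximation for the integral problem yields a cover of size at most $\alpha^p K^*$, transporting the $(1-o(1))\ln N$ lower bound for Set Cover into a lower bound of $\Theta((\log n)^{1/p})$ here.

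The main obstacle is the design of the forcing gadget $\Gamma_j$. The sum constraint $x_v \geq \sum_{u\in C(v)} x_u$ is not a natural encoding of the OR-flavored coverage requirement of Set Cover: a lower bound $x_v\geq 1$ propagates downward only as $\sum x_u \leq x_v$, which disperses demand across children rather than channeling it into a single chosen one. The key leverage is integrality combined with carefully tuned targets, so that a parent whose target pins its $x$-value near $1$ forces at most (and, when paired with a unit downstream demand, exactly) one integer child to be nonzero, realising an effective OR gate. Stitching together many such gadgets into a polynomial-size layered DAG while avoiding spurious integrality gaps elsewhere and preserving the clean $(K^*)^{1/p}$ scaling is the technical core of the reduction.
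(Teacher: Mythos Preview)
Your high-level framework is exactly right: reduce from Set Cover, invoke Feige's $(1-o(1))\ln N$ lower bound, and arrange that the optimal integral $\ell_p$ cost equals $(K^*)^{1/p}$ so an $\alpha$-approximation yields an $\alpha^p$-approximation for Set Cover. Where you diverge from the paper---and where you get stuck---is in the choice of targets on the set nodes.

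You set $a_{\sigma_i}=0$ and want ``selecting $S_i$'' to mean raising $x_{\sigma_i}$ to $1$. As you correctly observe, the constraint $x_v \ge \sum_{u\in C(v)} x_u$ only transmits upper bounds downward, so there is no clean way to force ``at least one $\sigma_i$ covering $u_j$ has $x_{\sigma_i}\ge 1$.'' You are left designing an integrality-based OR gadget that you have not actually constructed, and it is not clear such a gadget exists in polynomial size with the required cost accounting.

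The paper avoids this entirely by flipping the convention: set $a_{v_i}=1$ and let ``selecting $S_i$'' mean \emph{lowering} $x_{v_i}$ to $0$ (cost $1$). Now make each element $u_j$ the \emph{parent} of all $v_i$ with $j\in S_i$, give it target $a_{u_j}=|\{i:j\in S_i\}|-1$, and weight $m$ (or replicate) so that $x_{u_j}=a_{u_j}$ in any near-optimal solution. The sum constraint at $u_j$ then reads $\sum_{i:j\in S_i} x_{v_i}\le |\{i:j\in S_i\}|-1$; with integrality and $x_{v_i}\in\{0,1\}$ this forces at least one covering $v_i$ down to $0$. No gadget is needed: the sum constraint \emph{is} the coverage constraint. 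Completeness and soundness are then the two lines you already wrote. The missing idea is simply this sign flip on the set-node targets, which turns an OR requirement into a cardinality upper bound that the sum constraint encodes directly.
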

\begin{proof}
We prove the theorem by a reduction from the \textit{Set Cover} problem. In the Set Cover problem one is given sets $S_1, S_2, \ldots, S_m$ such
that $S_1 \cup S_2 \cup ... \cup S_m = \{1, 2, \ldots, n\}$ and the objective is to select a minimum number of $S_i$'s such that their union is
still $\{1, 2, \ldots, n\}$. It is a well known result of Feige~\cite{feige98} that unless $NP \subseteq TIME(n^{O(\log\log n)})$ it is NP-hard to
approximate Set Cover better than a factor of $(1-o(1))\log n$. Our reduction uses vertex weights so as to simplify
the construction. However, one can easily adapt the construction to the uniform case by adding multiple copies of nodes so as to simulate
large weights.

Given an instance of the Set Cover problem We construct the following
instance of the SBHSP problem:
\begin{itemize}
\item The vertex set of the output digraph will be $V = \{ v_1, \ldots, v_m, u_1,\ldots u_n\}$.
\item The edge set of the output digraph will be $E = \{ (v_i, u_j) : j \in S_i \}$.
\item The $a$ values on the vertices will be as follows. For all $v_i$ we have $a(v_i) = 1$, while for all $u_j$ we have
  $a(u_j) = |\{i : j \in S_i\}|-1$.
\item The $w$ values (weights) of the vertices will be as follows. For all $v_i$ we have $w(v_i) = 1$, while for all $u_j$ we
  have $w(u_j) = m$.
\end{itemize}
On the one hand it is easy to see that for any set cover of the original instance (of size $\alpha$) one can construct a solution to the
SBHSP instance (of cost $\sqrt[p]{\alpha}$) by assigning $x(v_i) = 0$ if $S_i$ is selected and $1$ otherwise, and $x(u_j) = a(u_j)$
for all $u_j$.

On the other hand it is not hard to see that the optimal solution to the SBHSP will have $x(v_i) \in \{0, 1\}$ for all $i$
and $x(u_j)=a(u_j)$ for all $j$. Furthermore, for any such solution (of cost $\alpha$) the set of $S_i$ for which $x(v_i)=0$ can be easily seen to
be a valid set cover of size $\alpha^p$. Hence, a hardness of approximation of $(1-o(1))\log n$ for the Set Cover problem implies a hardness of
approximation of $\Theta(\sqrt[p]{\log n})$ for SBHSP when the objective value is defined using the $\ell_p$ norm for any
$1 \leq p < \infty$.
\end{proof}

\begin{remark}
  The hard-instances of Set-Cover generated by Feige~\cite{feige98} have less Sets than elements. As a result it is not hard to see that the
  hardness achieved by the proof of the above theorem is in fact $((1-o(1))\ln n)^{1/p}$ for the weighted case and $\left(\frac{(1-o(1))\ln n}{2}\right)^{1/p}$ for the non-weighted case.
\end{remark}


\section{FPTAS for optimizing under the $\ell_1$ norm for Bilayered graphs}\label{sec:bilayer}
Consider a DAG $G = (V, E)$ which is bilayered, i.e. the vertex set can be partitioned as $V = U \cup W$ and each edge is from the $U$ side to
the $W$ side ($E \subseteq U \times W$.) In this section we show a fast Fully Polynomial Approximation Scheme for SBHSP with the $\ell_1$ norm
for such DAGs. The run time will be close to linear in the size of the DAG. The algorithm is a simple reduction to a well known class of problems
which admit such FPTASes. These problems are restricted class of the Mixed Positive Packing and Covering Problem, see \cite{fleischer}.

We start by the following simple observation.
\begin{lemma}\label{lem:bilayerobs}
  When optimizing the $\ell_1$ norm and when the input DAG is bilayered there is always an optimal solution with the following two properties,
  (i) $\forall w \in W : x_w = a_w$, (ii) $\forall u \in U : x_u \leq a_u$.
\end{lemma}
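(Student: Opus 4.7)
The plan is to start from an arbitrary optimal solution $\mathbf{x}^*$ and modify it in two stages, preserving optimality at every step. First I would enforce property (ii). Since every edge in a bilayered DAG goes from $U$ to $W$, a vertex $u \in U$ has no incoming edges; the only constraints in which $x_u$ appears are of the form $\sum_{u' : (u',w) \in E} x_{u'} \leq x_w$, and these are only relaxed when $x_u$ decreases. Hence if $x^*_u > a_u$, resetting $x^*_u := a_u$ strictly decreases $|x^*_u - a_u|$ without violating feasibility, contradicting optimality. So at any optimum we already have $x^*_u \leq a_u$ for every $u \in U$.

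Next I would enforce property (i). Given the previous stage, for each $w \in W$ the constraint $x_w \geq \sum_{u:(u,w)\in E} x_u$ together with $x_w \geq 0$ implies that at optimum $x^*_w = \max\{a_w, \sum_{u:(u,w)\in E} x^*_u\} \geq a_w$, so the task is to eliminate the strict inequality. I would process the vertices of $W$ one at a time. At the moment we process $w$, let $S := \sum_{u:(u,w)\in E} x_u$ denote the current sum of $x_u$ over $w$'s parents in $U$ (these may have shrunk due to earlier iterations, but $x_w$ itself has not been touched yet so $x_w = x^*_w$). If $S \leq a_w$, I would simply reset $x_w := a_w$, which is feasible and decreases the objective by $x^*_w - a_w \geq 0$. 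Otherwise $S > a_w$: I would choose any allocation $\delta_u \geq 0$ with $\sum_u \delta_u = S - a_w$ and $\delta_u \leq x_u$ (possible because $S - a_w \leq S$), decrease each $x_u$ by $\delta_u$, and then set $x_w := a_w$.

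The main thing to check is that this second case is neutral (or better) for the objective. Decreasing the $x_u$'s costs $+(S - a_w)$ in objective, because each $x_u \leq a_u$ at this point so $|x_u - a_u| = a_u - x_u$ grows by exactly $\delta_u$, while resetting $x_w$ saves $x^*_w - a_w$. Because $x^*_w \geq \sum_u x^*_u \geq S$ (the sum has only shrunk), the net change is $S - x^*_w \leq 0$. Feasibility is maintained: the new sum at $w$ equals $a_w = x_w^{new}$, and for any other $w'$ the sums $\sum_{u:(u,w')\in E} x_u$ only decreased while $x_{w'}$ is untouched. Property (ii) is preserved since $x_u$ values only decrease. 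Once every $w \in W$ has been processed, property (i) holds; summing the objective changes over all processed $w$ gives a non-positive total, and since $\mathbf{x}^*$ was already optimal this total must be exactly zero, so the final assignment is an optimum satisfying both (i) and (ii). The crux of the argument, and the only nonobvious step, is the accounting in the second case: the cost of pushing the parent values down is paid back exactly by the slack $x^*_w - S \geq 0$ plus the excess $S - a_w$ that we are trying to absorb, and this hinges on the tightness $x^*_w = \max\{a_w, S\}$ guaranteed by optimality of $\mathbf{x}^*$.
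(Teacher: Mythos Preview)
Your proposal is correct and follows essentially the same strategy as the paper: start from an arbitrary optimum and modify it vertex by vertex, using that decreasing an $x_u$ only relaxes constraints and that lowering a child to free up room for setting $x_w=a_w$ costs no more than it saves. The only cosmetic differences are that the paper handles (i) before (ii) while you reverse the order, and you spell out the per-$w$ accounting (tracking the running sum $S$) more carefully than the paper's one-line ``simultaneously decrease $x_w$ and the assigned value of some of its children.''
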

\begin{proof}
  Consider any optimal solution ${\bf x}$, and a vertex $w \in W$ for which $x_w \neq a_w$. If $x_w < a_w$ changing the assigned value of this
  vertex to $a_w$ produces another valid solution with a better
  objective function value. Now consider the case in which $x_w > a_w$. If $x_w > \sum_{u \in C(w)} x_u$
  then again we can improve the objective function by decreasing $x_w$ slightly, and if $x_w = \sum_{u \in C(w)} x_u$ we can simultaneously decrease
  $x_w$ and the assigned value of some of its children. This last step
  would help the objective value due to the improvement on $w$, and
  possibly hurt it by the exact same
  amount due to the decrease on its children, while maintaining a
  valid the solution. Doing this step on every node on the $W$ side results in a solution that satisfies the
  first condition.

  For the second condition observe that if $x_u > a_u$ we can simply decrease it to $a_u$ without changing the validity of the solution while
  decreasing the objective function. In other words any optimal solution must satisfy the second condition.
\end{proof}
Given the above lemma one can write the following linear program whose solution is the exact value of the optimal solution. The left hand side is
the original program based on the LP \eqref{bound1}-\eqref{bound3p} while the right hand side is the result of a simplification.\\
\hspace{\stretch{1}}
\begin{subequations}
\begin{minipage}{0.35\textwidth}
\begin{alignat*}{2}
\min\quad     \sum_{v \in V}d_i\\
\text{subject to}\quad d_u    & \ge  0  &\quad& \forall u \in U\\
x_u    & \ge  0  && \forall u \in U\\
d_u + x_u    &=  a_u && \forall u \in U\\
\sum_{u \in C(w)} x_u &\le a_w &&\forall w \in W
\end{alignat*}
\end{minipage}
\end{subequations}
\hspace{\stretch{1}}
\begin{subequations}
\begin{minipage}{0.6\textwidth}
\begin{alignat}{2}
\min\quad     \sum_{v \in V}d_i\nonumber\\
\text{subject to}\quad d_u    & \ge  0  &\quad& \forall u \in U \label{lp:bilayer:1}\\
d_u    &\leq  a_u && \forall u \in U \label{lp:bilayer:2}\\
\sum_{u \in C(w)} d_u &\ge (\sum_{u \in C(w)} a_u) - a_w &&\forall w \in W \label{lp:bilayer:3}
\end{alignat}
\end{minipage}
\end{subequations}
\hspace{\stretch{1}}
Once written in this form the above formulation is a, so called, Mixed Positive Packing and Covering Program. In fact it is among a certain
class of such programs for which Fleischer~\cite{fleischer} provides a fast FPTAS. In particular, we have the following theorem.
\begin{thm}
  When the input is a bilayered graph and the objective value is in terms of the $\ell_1$ norm, there is an algorithm that given $\epsilon>0$ runs
  in time $O(|V||E|\log(|V|)/\epsilon^2)$ and returns a valid solution with objective value no more than $(1+\epsilon)$
  times that of the optimum.
\end{thm}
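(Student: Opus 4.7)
\medskip

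My plan is to obtain the theorem as a direct consequence of the reduction to a Mixed Positive Packing and Covering Program and the known FPTAS of Fleischer~\cite{fleischer}. First, using Lemma~\ref{lem:bilayerobs}, I would argue that any optimum of the original problem can be taken with $x_w = a_w$ on the $W$-side and $x_u \leq a_u$ on the $U$-side, which eliminates the $x_w$ variables and the absolute values, yielding the simplified LP on the right-hand side with constraints \eqref{lp:bilayer:1}--\eqref{lp:bilayer:3}. The key observation is that in this formulation the only variables are $\{d_u\}_{u \in U}$, the box constraints $0 \leq d_u \leq a_u$ are pure packing constraints with respect to a rescaled variable, and the inequality \eqref{lp:bilayer:3} is a single covering constraint per $w \in W$ with non-negative coefficients and positive right-hand side (we may assume $\sum_{u \in C(w)}a_u > a_w$, since otherwise the covering constraint is trivially satisfied by $d = 0$ and the corresponding $w$ can be removed from the instance in $O(|E|)$ preprocessing).

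Next, I would phrase the minimization of $\sum_u d_u$ together with these constraints in the precise input format required by Fleischer's framework: after rescaling $y_u := d_u/a_u \in [0,1]$, the objective becomes $\sum_u a_u y_u$ (a positive linear objective), and each covering constraint becomes $\sum_{u \in C(w)} a_u y_u \geq b_w$ with $b_w = (\sum_{u \in C(w)} a_u) - a_w > 0$. This is exactly the setting of a covering LP with box constraints on the variables — the ``restricted'' class for which Fleischer gives a width-independent FPTAS. I would then invoke her algorithm to obtain, for a user-chosen $\epsilon > 0$, a feasible $y$ (hence a feasible $d$, hence a feasible $\mathbf{x}$ via $x_u = a_u - d_u$ and $x_w = a_w$) whose objective is within a $(1+\epsilon)$ factor of optimal.

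For the running time, I would instantiate Fleischer's bound: her algorithm runs in $\tilde O(Md/\epsilon^2)$ where $M$ is the number of non-zeros in the constraint matrix and $d$ is a suitable dimension parameter. In our instance the number of non-zeros in the covering constraints is exactly $|E|$, the number of covering constraints is $|W| \leq |V|$, and the number of variables is $|U| \leq |V|$; substituting these and absorbing logarithmic factors gives the claimed $O(|V||E|\log(|V|)/\epsilon^2)$ running time. Finally, one needs to check that the feasible output of the FPTAS translates back into a feasible $\mathbf{x}$ for the original SBHSP with the same objective value, which is immediate from the substitutions above and the feasibility guarantees of Lemma~\ref{lem:bilayerobs}.

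The main obstacle is bookkeeping rather than conceptual: one has to match the input format of~\cite{fleischer} precisely (the proper scaling of the box constraints, the positivity conditions on coefficients and right-hand sides, and the trivial handling of degenerate $w \in W$ with $\sum_{u \in C(w)} a_u \leq a_w$) and then read off the correct dependence on $|V|$, $|E|$, and $\epsilon$ from the general complexity bound stated there. Once that matching is made explicit, both the $(1+\epsilon)$-approximation guarantee and the stated runtime follow from the black-box invocation.
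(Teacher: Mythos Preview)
Your proposal is correct and follows essentially the same route as the paper: reduce the bilayered instance, via Lemma~\ref{lem:bilayerobs}, to the simplified LP \eqref{lp:bilayer:1}--\eqref{lp:bilayer:3}, recognize it as a Mixed Positive Packing and Covering Program, and invoke Fleischer's FPTAS as a black box to obtain both the $(1+\epsilon)$ guarantee and the stated running time. The paper's own proof is terser (it cites Theorem~2.1 of~\cite{fleischer} directly, notes that the width constant $C$ equals $1$, and observes that each iteration's most-violated-constraint search costs $O(|E|)$), whereas you spell out the rescaling $y_u = d_u/a_u$ and the degenerate-$w$ preprocessing explicitly; but these are exactly the bookkeeping details you yourself flag, not a different argument.
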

\begin{proof}
  The proof is a simple application of Theorem 2.1 from \cite{fleischer} to the above Linear Program. A simple corollary of that theorem is that
  the algorithm finishes in $O(|V||E|\log(|V|)/\epsilon^2)$ steps\footnote{the constant $C$ in Theorem 2.1 of \cite{fleischer} is 1 in our
    case.} and in each step one has to find the most unsatisfied constraint among \eqref{lp:bilayer:1}-\eqref{lp:bilayer:3} given a current
  solution $\overline{\bf d}$. Each such step can be done by evaluating all the constraints in total time $|E|$.
\end{proof}


\comment{\section{The Push Algorithm}
\begin{algorithm}[ht]
\KwIn{Undirected tree $T=(V,E)$, with a vector of vertex weights
  $\mathbf{a} \in \mathbb{R}_+^n$}
\KwOut{A feasible vector of weights $\mathbf{x} \in \mathbb{R}_+^n$
  for $V$}
\BlankLine

\SetKwBlock{Begin}{begin}{end}

\label{alg1-app}
\SetKwFunction{KwPush}{Push-Path}
\SetKwFunction{KwImproveNode}{ImproveSubtree}
Let $v_1,v_2, \dots, v_{n-1}, v_n$ be the vertices in $T$ sorted in
post-order. \\

\For{$v \leftarrow 1$ to $n$}{
$x_v=max\{\sum_{u\in C(v_i)}x_u, a_v\}$\\
\KwImproveNode{$v$}
}
\BlankLine
\KwImproveNode{Vertex $u$} \\
\While{$\exists$ path $P$ from $u$ down to a leaf, and $\epsilon>0$ such that
  \KwPush($\mathbf{x}, P$,$\epsilon$)=$\epsilon$}{

\KwPush{$\mathbf{x}, p, \epsilon$} \\
}

\BlankLine
\KwPush{Assignment $\mathbf{x}$, Path $P$, Non-negative real-value $\epsilon$} \\
\Begin{
Let $v_1,\ldots,v_k$ be the sequence of nodes on the $P$ from top to
bottom. \\

$old= \sum_{1 \leq i \leq k}|x_{v_i}-a_{v_i}|$ \\
$x_{v_1}=x_{v_1}-\epsilon$

\For{$i=2$ to $k$}{
$t=\sum_{u\in C(v_{i-1})}x_u-x_{v_{i-1}}$ \\
\If{$t > 0$}{
$x_{v_i}=x_{v_i}-t$
}
}
$new= \sum_{1 \leq i \leq k}|x_{v_i}-a_{v_i}|$ \\
\Return $old-new$
}
\caption{Push-Improve}
\end{algorithm} }

\section{Omitted figures and algorithms}\label{app:1}
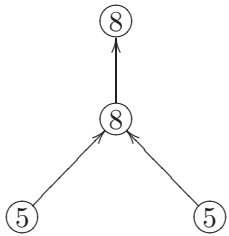
\begin{figure}[hp]
\begin{equation*}
\xymatrix{
 & *+[o][F-]{8} \\
& *+[o][F-]{8} \ar[u] \\
*+[o][F-]{5} \ar[ur] & & *+[o][F-]{5} \ar[ul]
}
\end{equation*}
\caption{A counter-example for the naive algorithm. Node annotations
  denote the $a$ values. The naive algorithm will obtain an objective
  function value of $4$, whereas the optimum value is $2$.}\label{fig1}
\end{figure}
\begin{algorithm}[hp]
\KwIn{Vertex $v$}
\BlankLine
\SetKwBlock{Begin}{begin}{end}
\SetKwFunction{KwSetParams}{Set-Params}
\label{alg2}
\SetKwFunction{KwPush}{Push-Search}
\Begin{
\tcc{If $x_v = a_v$ $T(v)$ is optimal}
\If{$x_v > a_v$}{
  \KwPush{$v, \infty, 0$} \\
}
}
\BlankLine
\KwPush{Vertex $u$, Non-negative real-value $\epsilon$, Non-negative
  Integer $\delta$} \\
\Begin{
\KwSetParams{$u, \epsilon, \delta$} \\
\If{$\epsilon_u = 0$}{
  \Return{$0$}
}
$sum \leftarrow 0$ \\ 
$\ell \leftarrow \min \{ \epsilon_u, x_u - \sum_{k:\text{child of
    $u$}}x_k \}$ \\
  \If{$\ell > 0$ and $\delta_u > 0$}{
     $x_u \leftarrow x_u - \ell $, $sum \leftarrow \ell$ \\
      \KwSetParams{$u, \delta, \epsilon-\ell$} 
    }
  \ForEach{$j \in c(u)$}{
    \If{$sum = \epsilon_u$}
    {
      \Return{0} \quad\quad \textbf{/* Speedup}
    }
    $t \leftarrow $\KwPush{$j, \epsilon_u,\delta_u$} \\
    $sum \leftarrow sum + t,\; x_u \leftarrow x_u - t$ \\
    \KwSetParams{$u,\delta,\epsilon - sum$} \\
   }
   \Return{$sum$}
}
\BlankLine
\KwSetParams{Vertex $i$, Non-negative integer $\delta$, Non-negative
  real-value $\epsilon$} \\
  \If{$x_i > a_i$}{
    $\delta_i \leftarrow \delta + 1, \epsilon_i \leftarrow \min\{\epsilon, x_i - a_i\}$
  }
  \Else{
    $\delta_i \leftarrow \delta -1, \epsilon_i \leftarrow \min \{x_i, 
    \epsilon \}$ \\
  }
\caption{The improved ImproveSubtree procedure}
\end{algorithm} 


%
%


\end{document}